\newcommand{\xhdr}[1]{\vspace{0.05in}\noindent \textbf{#1.} \noindent}
\renewcommand{\b}[1]{\left[#1\right]}
\newcommand{\E}[1]{\mathbb{E}\b{#1}}
\newcommand{\p}[1]{\left(#1\right)}
\def\squarebox#1{\hbox to #1{\hfill\vbox to #1{\vfill}}}
\newcommand{\qed}{\hspace*{\fill}\vbox{\hrule\hbox{\vrule\squarebox{.667em}\vrule}\hrule}\smallskip}
\newenvironment{proof}{\noindent{\bf Proof:~~}}{\(\qed\)}
\newcommand{\eps}{\varepsilon}
\newcommand{\comment}[1]{}
\newtheorem{theorem}{Theorem}[section]
\newtheorem{corollary}[theorem]{Corollary} 
\newtheorem{lemma}[theorem]{Lemma}
\newtheorem{proposition}[theorem]{Proposition}
\newtheorem{claim}[theorem]{Claim}
\title{Stochastic Model for Sunk Cost Bias \thanks{Work supported in part by BSF grant 2018206, Vannevar Bush Faculty Fellowship, MURI grant W911NF-19-0217, AFOSR grant FA9550-19-1-0183 and ISF grant 2167/19.}%\thanks{Missing proofs are included as supplementary material.}
}
 \author[1]{\small Jon Kleinberg}
 \author[2]{Sigal Oren}
 \author[1]{Manish Raghavan}
 \author[2]{Nadav Sklar}
 \affil[1]{%
	Computer Science Dept.\\
	Cornell University\\
	Ithaca, New York, USA
}
\affil[2]{%
	Computer Science Dept.\\
	Ben-Gurion University of the Negev\\
	Be'er Sheva, Israel
}
\begin{document}
\maketitle

\begin{abstract}
	
We present a novel model for capturing the behavior of an agent exhibiting sunk-cost bias in a stochastic environment. Agents exhibiting sunk-cost bias take into account the effort they have already spent on an endeavor when they evaluate whether to continue or abandon it. We model planning tasks in which an agent with this type of bias tries to reach a designated goal. Our model structures this problem as a type of Markov decision process: loosely speaking, the agent traverses a directed acyclic graph with probabilistic transitions, paying costs for its actions as it tries to reach a target node containing a specified reward. The agent's sunk cost bias is modeled by a cost that it incurs for abandoning the traversal: if the agent decides to stop traversing the graph, it incurs a cost of $\lambda \cdot C_{sunk}$, where ${\lambda \geq 0}$ is a parameter that captures the extent of the bias and $C_{sunk}$ is the sum of costs already invested. 

We analyze the behavior of two types of agents: naive agents that are unaware of their bias, and sophisticated agents that are aware of it. Since optimal (bias-free) behavior in this problem can involve abandoning the traversal before reaching the goal, the bias exhibited by these types of agents can result in sub-optimal behavior by shifting their decisions about abandonment. We show that in contrast to optimal agents, it is computationally hard to compute the optimal policy for a sophisticated agent. Our main results quantify the loss exhibited by these two types of agents with respect to an optimal agent. We present both general and topology-specific bounds.

\end{abstract}

\newpage

\section{Introduction} \label{sec-sunk} Imagine that you paid \$50 to go to a rock concert and five minutes into the show you realize that the acoustics are horrible, the venue is smelly and the band is not playing well. Will you stay or go? Would you have made a different decision if the concert were free? Many will choose to stay in the concert in the first case but leave in the second one. This phenomenon, in which effort or cost invested in the past affects current decisions, has fascinated many researchers from different disciplines. This is evident from the variety of names the phenomenon has been studied under:
%This phenomenon received different names including given including:  
the sunk cost effect \citep{arkes1985psychology, thaler1980toward}, escalation of commitment \citep{staw1976knee} and the Concorde fallacy \citep{dawkin1976selfish, weatherhead1979savannah}. The latter is named after the famous supersonic airplane whose development was continued long after it was clear that it did not have any economic justification. Some of the many situations in which sunk cost has been observed include auctions \citep{augenblick2016sunk}, medical treatment \citep{coleman2010sunk,eisenberg2012falling}, project development \citep{garland1990throwing} and poker \citep{smith2009poker}.  

Factoring sunk cost into future decisions is at odds with standard economic theory advocating that decisions should only depend on marginal costs and gains. Several explanations have been offered for the sunk cost effect. \citep{arkes1985psychology} suggest it is a manifestation of the ``do not waste'' rule that we are often taught as children. Early work in psychology \citep{aronson68dissonance, staw1980rationality} attributes this to self justification: decision-makers continue in the same course of action to justify their initial decision and avoid cognitive dissonance. \citep{thaler1980toward} applies prospect theory \citep{kahneman-prospect} to explain this bias. %(in particular loss aversion together with a fixed reference point) 

%Perhaps move to realted work
\comment{
Heath (\cite{heath1995escalation, heath1996mental}) suggests a more general model that also captures situations in which individuals invest less than they should have. Lastly, a recent paper (\cite{haita2013sunk}) relates sunk cost bias to realization bias (\cite{barberis2012realization}). 
}

In this paper, we analyze the performance of agents who engage in activities that require multi-step planning in the presence of sunk cost bias. Through this, our work is situated in a recently growing literature in algorithmic game theory aiming to model and theoretically analyze planning related biases (e.g., \citep{kleinbergTime, Gravin:present-bias, kleinberg-soph, Kraft-time,present-bias-tang, Kleinberg-mult-bias}). Despite the crucial role played by sunk cost bias in empirical studies of behavior, it has received very little theoretical study in this style; the main prior contribution is a model of \citep{Kleinberg-mult-bias} that considered the interplay of sunk cost bias with present bias in a deterministic setting. However, looking at the scenarios discussed so far, we see that many of them crucially involve agents who are planning with respect to uncertainty about future outcomes: the sunk cost bias often becomes particularly dangerous when an agent takes an action while the future remains uncertain, and then is subject to the sunk cost from this action after the uncertainty is resolved. Indeed, many of the most natural questions that arise when studying sunk cost bias in isolation (separately from other effects such as present bias) do not have natural formulations in deterministic models. It is therefore an important and unexplored question to analyze the effects of sunk cost bias in a model featuring uncertainty.

%Our paper is situated in a recently growing literature in algorithmic game theory aiming to model and theoretically analyze planning related biases (e.g., \citep{kleinbergTime, Gravin:present-bias, kleinberg-soph, Kraft-time,present-bias-tang, Kleinberg-mult-bias}). This literature considered the modeling of sunk cost bias in a deterministic setting: \citet{Kleinberg-mult-bias} presented and analyzed a graph theoretic model featuring the interplay between sunk cost bias and present bias. However, looking at the scenarios discussed so far, we see that many of them crucially involve agents who are planning with respect to uncertainty about future outcomes: the sunk cost bias often becomes particularly dangerous when an agent takes an action while the future remains uncertain, and then is subject to the sunk cost from this action after the uncertainty is resolved. Another important aspect of studying sunk cost in a stochastic environment is that it allows us to study it in isolation (as opposed to the interplay between sunk cost bias and present bias which was studied by \citet{Kleinberg-mult-bias}. It is a natural question, therefore, to analyze the effects of sunk cost bias in a model featuring uncertainty.

\xhdr{A Stochastic Model of Sunk Cost} %\label{sec-stochastic-sunk}
We present a stochastic model aiming to study scenarios involving sunk cost bias and uncertainty. We focus on situations in which taking sunk cost into account is irrational,\footnote{We note that as \citep{mcafee2010sunk} advocates taking sunk
	cost into account can sometimes be rational. For example, in a project with an
	unknown completion time, the time already invested can hint at the actual
	completion time.}  for example, a gambler who has already ``invested'' \$100 in a slot machine and keeps playing because they are sure that after ``investing'' so much money they will hit the jackpot soon. These situations involve the following basic ingredients: an agent needs to formulate a plan in which it traverses a set of states, trying to reach a designated goal state. The transitions between states are stochastic based on the agents' actions, and the agent must deal with its own sunk cost bias as it formulates and updates its plan for traversing the states.

Motivated by these considerations, we model the agent's problem using a directed acyclic graph in which the agent must traverse a path from a start node $s$ to a target node $t$, with a reward of $R$ for reaching $t$. Each node is assigned a cost of going forward and there is a probability distribution on its outgoing edges, determining the next node that the agent would reach. This is a type of a Markov decision process. After each step, the agent has to choose whether to stop or keep traversing the graph. If the agent at some node $v$ decides to continue traversing the graph then it pays the cost assigned to $v$ and moves to a neighboring node of $v$ determined stochastically according to a distribution on $v$'s neighbors.

We model the decision making process of agents with sunk cost bias similarly to \citep{Kleinberg-mult-bias}. We assume that an agent exhibiting sunk cost bias has some parameter {$\lambda\geq 0$}  that represents the extent to which the agent cares about sunk cost\footnote{It is natural to limit $\lambda$ to non-negative values as negative values imply that the agent believes that if it would stop it would get some of its investment back. This is the opposite of sunk cost bias.}. {While the regime of $0\leq \lambda \leq 1$ is perhaps more natural, for sake of generalization, we present and analyze our model using the more general assumption of $\lambda\geq 0$.} Let $C_{sunk}$ be the cost that the agent already invested. An agent with sunk cost bias views the option of quitting as having a cost of $\lambda C_{sunk}$, hence it will continue traversing the graph if and only if the expected payoff from continuing is greater than $-\lambda C_{sunk}$. We stress that, while our paper uses the basic means of accounting for sunk cost employed by \citep{Kleinberg-mult-bias}, the models studied in the two papers are inherently different. The current paper studies a stochastic model focusing on the effects of sunk cost bias by itself; in contrast, the model of \citep{Kleinberg-mult-bias} is a deterministic formalism that studies the simultaneous effect of present bias and sunk cost bias, and through its deterministic structure cannot encapsulate the key issues that we address here.

%$C_{comp}$ be his expected cost for reaching $t$ and $R$ a reward for reaching $t$. An agent with sunk cost bias views the option of quitting as having a value of $-\lambda C_{sunk}$, hence it will 
%continue traversing the graph as long as $R- C_{comp} \geq  -\lambda C_{sunk}$.
{Following O'Donoghue and Rabin \citep{odonoghue-now-or-later,odonoghue-choice-procrastination},
we analyze the behavior of two types of agents: naive agents that are unaware of their bias, and sophisticated agents that are aware of it. Even though a sophisticated agent is aware of its bias it cannot simply ignore it. However, loosely speaking, it can take future actions to minimize the negative implications of its bias. 
}
To understand the behavior of the different agents, it is best to walk through a simple example. Consider a slot machine with a probability of $1/3$ of winning a reward of \$10.  As part of a promotion the
casino prices the first round at \$3 instead of the usual price of \$4.
% first round costs \$3 and the second round costs \$4. 
This scenario is depicted in the graph in Figure \ref{fig-slot-machine}. An unbiased (e.g., optimal) agent would only play as long as the expected payoff is greater than $0$. Hence, in this game, it will only play the first round. Now, consider a naive agent exhibiting sunk cost bias with a parameter $\lambda = 1/3$. If it loses the first round it would have a sunk cost of $1/3 \cdot 3=1$. Thus, its payoff for quitting would be $-1$ while its expected payoff for continuing would be $1/3\cdot 10 - 4 = -2/3$. The naive agent will therefore play the next round as well and attain a negative expected payoff. In fact, we show that the negative payoff of a naive agent can be exponentially large in the size of the graph.

% \socomment{TODO: Make the edges of the triangle shorter}
	\begin{figure}[t]
		\centering
		\begin{tikzpicture}[->,shorten >=1pt,auto,node distance=2cm, thin]
		\node(0) [circle, draw, inner sep=0pt, minimum size=1cm] at (0,0) {$s;3$};
		\node (1) [circle, draw, inner sep=0pt, minimum size=1cm] at (1.5,1.5) {$v_1;4$};
		\node (2) [circle, draw, inner sep=0pt, minimum size=1cm] at (4,1.5) {$v_2$};
		\node (5) [circle, draw, inner sep=0pt, minimum size=1cm] at (3,0) {$t$};
		
		\path[every node/.style={sloped,anchor=south,auto=false}]
		(0) edge node {{ \small$2/3$}} (1)
		(1) edge node {{ \small$2/3$}} (2)

		(0) edge node {{ \small$1/3$}} (5)
		(1) edge node {\small$1/3$} (5)
		;
		\end{tikzpicture}
		\caption{For $R=10$, naive sunk cost bias agents will continue at $v_1$ and end up with a negative expected payoff.  }
		\label{fig-slot-machine}
	\end{figure}
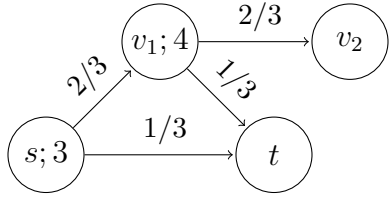

A sophisticated agent with sunk cost bias is aware of its bias. 
{
This means that it knows which action it will take in any 
subsequent node for any possible sunk-cost and can use this information to compute its expected payoff. This is different than a naive agent who is unaware of its bias and hence, wrongfully, believes that in the future, it will behave the same as an unbiased agent and therefore, in all subsequent nodes, will have the same expected payoff as an unbiased agent.

We observe that the expected payoff of a sophisticated agent is always non-negative since it would stop traversing the graph if it knows that its expected payoff for doing so will be negative. This is the case in the example in Figure \ref{fig-slot-machine} in which the sophisticated agent knows that if it will play the first round and lose than it will also play the second round. This implies that its expected payoff for playing the slot machine is $1/3\cdot 10 -3 + 2/3(1/3\cdot 10 -4) = -1/9.$
}
% and hence would stop traversing the graph if it knows that as a result of its bias it will end up with a negative expected payoff. Therefore, the payoff of a sophisticated agent will always be non-negative. 
%In the example in Figure \ref{fig-slot-machine}, a sophisticated agent %will not traverse the graph at all. 
The fact that sophisticated agents sometimes stop traversing the graph prematurely makes their payoff potentially smaller than that of an optimal agent, although they avoid some of the more dramatic payoff shortfalls of naive agents.

\xhdr{Results} 
We begin by considering naive agents. Since they believe that in the future they will behave as optimal agents, their policy can be efficiently computed similarly to the policy of optimal agents. This is done by going over the graph in reverse topological order
and computing the expected payoff of continuing at each node. Naive agents can then decide whether to continue or not by comparing these values against their sunk cost. We show that since they are oblivious to their sunk cost bias, we can construct instances in which they accumulate some sunk cost in the beginning. Then, due to this sunk cost, they continue traversing the graph and accumulate more and more sunk cost even when their expected payoff is negative. As a result, they may end up with a negative payoff that is exponential in the graph's size. This result illustrates the danger of marketing strategies that reduce initial entrance costs to lure individuals to begin {some risky endeavor (e.g., as the investing app Robinhood gives a free stock to anyone opening a new account) or take on some bad habit (e.g., tobacco companies giving free cigarettes to employees)}. 

Our main focus in this paper is on sophisticated agents. The behavior of agents that are aware of their bias is much more complex. In contrast to optimal and naive agents, they cannot compute their optimal policy by going over the nodes in reverse topological order. This is because the decision of whether to stop or continue at each node depends on the amount of sunk cost they accumulated along the way. When there are different paths reaching the same node, the amount of sunk cost may vary depending on the realized path. In fact, we show that the problem of computing the optimal policy for a sophisticated agent is \#P-Hard. This is done by reducing from the $0-1$ knapsack solution counting problem. Roughly speaking, we construct instances in which computing the expected payoff of a sophisticated agent if it starts traversing the graph requires counting the number of valid solutions to a corresponding knapsack problem. It is worth noting that a different type of hardness (i.e., NP-hardness) was proven by \citep{Kleinberg-mult-bias} for sophisticated agents exhibiting both sunk cost bias and present bias. The results strengthen one another and show that under different models being sophisticated about one's sunk cost bias may be quite challenging. As part of future research, it would be fascinating to model and analyze heuristics that individuals may use to bypass this hardness.

We continue with comparing the payoff of a sophisticated agent against the payoff of an optimal agent. Roughly speaking, sophisticated agents exhibit the opposite problem than naive agents: they take a too conservative approach and stop traversing the graph prematurely. As a result, they can have a payoff of $0$ even when the optimal agent has a positive payoff. {When $\lambda$ is approaching infinity this gap can attain its maximal value which is $R$. However, the payoff difference of $R$ is far from tight for smaller values of $\lambda$. Hence, we look for tighter bounds that are more suitable for such values.} We provide a number of bounds on the difference between the payoff of optimal and sophisticated agents. For example, we show that $\pi_s \geq \pi_o - \frac{\lambda}{1+\lambda} \cdot R$, where $\pi_s$ and $\pi_o$ are the expected payoffs of the sophisticated and optimal agents respectively.
%This bound is tight only for a slightly different model in which each edge may have a different cost.
 %However, by analyzing a small $3$-node graph, we show that this is not a tight bound for our model. 
We present some evidence that this bound is not tight, particularly, for $0\leq \lambda \leq 1$. We suspect that graphs achieving the worst case difference, for $0\leq \lambda \leq 1$, are fan graphs (graphs that include a path plus an edge from each node in the path to the target). We show that for such graphs $\pi_s \geq \pi_o -\frac{1}{e} \cdot \lambda \cdot R$ and prove that this bound is {essentially} asymptotically tight (in the graph's size).

\section{Model and Naive Agents} In our model an agent is traversing a directed acyclic graph (i.e., DAG).
The graph is a Markov decision process (i.e., MDP) where each state $u$ has a cost $c(u)$ \footnote{This is a restricted type of MDP in which for every node $u$ the transition cost to each neighbor is the same.} which is the cost of an agent at $u$ to continue traversing the graph and for each neighboring states $u$ and $v$,
$p(u,v)$ denotes the probability of a $u \to v$ transition. The graph also has a designated target node $t$. If the agent reaches $t$ it receives a 
reward of $R$. An agent traversing the graph forms a policy that decides for each node in the graph 
whether to continue traversing the graph or not. The goal of an agent is to choose a policy that 
maximizes its \emph{expected payoff} -- the probability of reaching the target multiplied by $R$ minus
the expected cost. We can define the expected payoff of an agent inductively as follows: if the agent decides to continue from $u$, the
expected payoff at a vertex $u$ is the weighted average of the expected
payoff of each neighbor vertex minus the cost for continuing. If it decides to stop, its expected payoff is $0$. 
We denote the expected payoff of an optimal agent (i.e., bias-free agent) currently at node $u$ by $\pi_o(u)$. We have that:
$${
\pi_o(u) = \max\{\sum_{v \in N(u)} p(u,v) \cdot \pi_o(v) - c(u), 0\} }$$
where $N(u)$ denotes the set of neighbors of node $u$.

An agent with sunk cost bias is characterized by a parameter ${\lambda \geq 0}$ that captures the intensity of its bias. 
%\nsedit{Note that the regime in which $\lambda \leq 1$ which implies that the sunk cost is at most the actual cost is quite natural. However, higher values of $\lambda$ may correspond to situations in which an individual has such a strong distaste for waste that abandoning entails a cost higher than what he invested. Therefore, we analyze our model for any $\lambda \geq 0$ and focus on the case where $0 \leq \lambda \leq 1$ in the last section.}
We begin by considering 
naive agents. These agents are unaware of their bias and as a result plan as if they will behave optimally in the future. A naive sunk cost bias agent, perceives the cost of stopping as $\lambda$ multiplied by the cost it invested. Therefore, %the expected payoff of a naive agent who already paid a total cost of $K$ is given by
%$$\pi_n(u,K) = \max\p{\sum_{v \in N(u)} p(u,v) \cdot \pi_o(v) -
%	c(u), -\lambda K }$$
%In particular, the naive agent 
it will continue traversing the graph at node $u$ after accumulating a sunk cost of $K$ if and only if $\sum_{v \in N(u)} p(u,v) \cdot \pi_o(v) -c(u) \geq -\lambda K$. For consistency, we use the term expected payoff, for all types of agents to denote their \emph{actual} expected payoff and not the perceived one. Thus, the expected payoff of the naive agent is $\pi_n(u,K) =0$ if it decides to abandon and is otherwise 
$$\pi_n(u,K) = \sum_{v \in N(u)} p(u,v) \cdot \pi_n(v,K+c(u)) -c(u).$$
%\subsection{Naive Sunk Cost Bias Agent}
It is not very hard to construct instances in which the payoff of the naive agent is negative. Here, we show that its negative payoff can grow exponentially in the graph's size. We first get the agent to incur sufficient 
sunk cost. Once the agent incurred this amount of sunk cost it would rather
continue with no chance of reaching any reward than cut its losses and stop.
Essentially, the agent prefers continuing a lost cause to admitting defeat. It
always believes that it will take just one more step and then give up, but once
it gets there, it finds that it no longer wants to give up.

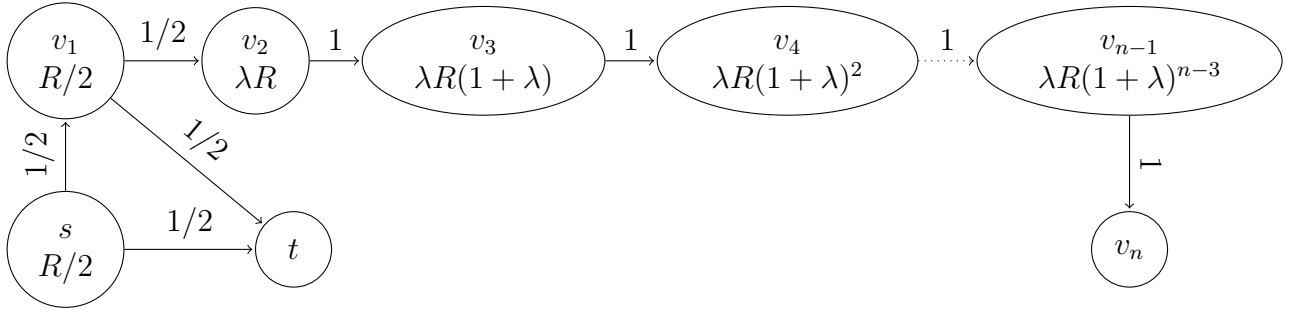
\begin{figure*}[t]
	\centering
	\begin{tikzpicture}[->,shorten >=1pt,auto,node distance=2cm, thin]
	\node (0) [circle, draw, inner sep=0pt,minimum size=1pt] at (0,0) 
	{\begin{tabular}{c} 
		$s$
		\\ $R/2$ 
		\end{tabular}};
	\node (1) [circle, draw, inner sep=0pt,minimum size=1pt] at (0,2.5) 
	{\begin{tabular}{c} 
		$v_1$
		\\ $R/2$ 
		\end{tabular}};;
	\node (2) [circle, draw,inner sep=0pt,minimum size=1pt] at (2.5,2.5) 
	{\begin{tabular}{c} 
		$v_2$
		\\ $\lambda R$ 
		\end{tabular}};
	\node (3) [ellipse, draw, inner sep=0pt,minimum size=1pt] at (5.5,2.5) 
	{\begin{tabular}{c} 
		$v_3$
		\\ $\lambda R(1+\lambda)$ 
	\end{tabular}};
	\node (4) [ellipse, draw, inner sep=0pt,minimum size=1pt] at (9.5,2.5) 
	{\begin{tabular}{c} 
		$v_4$
		\\ $\lambda R(1+\lambda)^2$ 
		\end{tabular}};;
	%\node (d) at (11.6,2.5) {$\cdots$};
	%\node(x) at (11.6, 2.8) {$1$};
	\node (e) [ellipse, draw, inner sep=0pt,minimum size=1pt] at (14, 2.5) 
	{\begin{tabular}{c} 
		$v_{n-1}$
		\\ $\lambda R(1+\lambda)^{n-3}$ 
		\end{tabular}};
	\node (vn) [ellipse, draw,  inner sep=0pt,minimum size=1cm] at (14,0) {$v_n$};
	\node (5) [ellipse, draw,  inner sep=0pt,minimum size=1cm] at (3,0) {$t$};
	
	\path[every node/.style={sloped,anchor=south,auto=false}]
	(0) edge node {$1/2$} (1)
	(1) edge node {$1/2$} (2)
	(2) edge node {$1$} (3)
	(3) edge node {$1$} (4)
	%(4) edge node {$1$} (d)
	(4) edge[dotted] node {$1$} (e)
	(e) edge node {$1$} (vn)
	
	(0) edge node {$1/2$} (5)
	(1) edge node  {$1/2$} (5)
	;
	\end{tikzpicture}
	\caption{Graph for which the expected payoff are exponentially negative.}
	\label{fig:exp_neg_expected}
\end{figure*}

\begin{claim}
	The negative expected payoff of a naive agent may be exponential in the graph's size.
\end{claim}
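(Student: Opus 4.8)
The plan is to analyze the family of graphs in Figure~\ref{fig:exp_neg_expected} and show that the naive agent's expected payoff starting from $s$ with no sunk cost is $\pi_n(s,0) = \frac{R}{4}\b{1 - (1+\lambda)^{n-2}}$, which is exponentially negative in the number of nodes for any fixed $\lambda > 0$. The intuition driving the construction is a split into a ``good'' part (the escape edges to $t$ from $s$ and from $v_1$) that makes \emph{entering} worthwhile even for an optimal agent, and a deterministic ``lost-cause'' chain $v_2 \to v_3 \to \cdots \to v_n$ that carries no reward but whose costs are calibrated so that the growing sunk cost keeps luring the naive agent one more step.

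First I would compute the optimal payoffs $\pi_o$ in reverse topological order. Since the chain $v_2,\dots,v_n$ reaches no reward and every cost is positive, an optimal agent stops immediately there, so $\pi_o(v_i) = 0$ for every $i \geq 2$. Using $\pi_o(t) = R$ and $c(s) = c(v_1) = R/2$, a direct evaluation gives $\pi_o(v_1) = \max\{\tfrac12 R + \tfrac12\cdot 0 - \tfrac{R}{2},\,0\} = 0$ and likewise $\pi_o(s) = 0$. Thus the optimal agent earns exactly $0$ (e.g.\ by continuing at $s$ and $v_1$ and stopping at $v_2$), which is the baseline against which the naive loss is measured.

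Next I would trace the naive agent, whose continuation rule uses these $\pi_o$ values but whose sunk cost accrues from the \emph{actual} costs paid. The perceived continuation value at $s$ and at $v_1$ equals $0 \geq -\lambda K$, so the agent continues through both and enters the chain at $v_2$ with sunk cost $K = c(s) + c(v_1) = R$. The crux is an inductive claim: upon reaching $v_i$ (for $i \geq 2$) the accumulated sunk cost is exactly $R(1+\lambda)^{i-2}$. Granting this, the perceived continuation value at $v_i$ is $\pi_o(v_{i+1}) - c(v_i) = -\lambda R(1+\lambda)^{i-2}$, which meets the continuation threshold $-\lambda K = -\lambda R(1+\lambda)^{i-2}$ with equality; hence (breaking ties toward continuing, per the ``$\geq$'' in the naive rule) the agent continues, pays $c(v_i)$, and its sunk cost becomes $R(1+\lambda)^{i-2} + \lambda R(1+\lambda)^{i-2} = R(1+\lambda)^{i-1}$, re-establishing the hypothesis at $v_{i+1}$.

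Finally I would sum the costs along the lost-cause path; the geometric series telescopes to $c(s) + c(v_1) + \sum_{i=2}^{n-1} c(v_i) = R(1+\lambda)^{n-2}$. The naive agent therefore reaches $t$ only on the two escape branches, of probabilities $\tfrac12$ and $\tfrac14$ netting $\tfrac{R}{2}$ and $0$ respectively, and with probability $\tfrac14$ pours $R(1+\lambda)^{n-2}$ into the dead-end chain for no reward. Combining these contributions yields $\pi_n(s,0) = \tfrac{R}{4} - \tfrac14 R(1+\lambda)^{n-2} = \tfrac{R}{4}\b{1 - (1+\lambda)^{n-2}}$, which is exponentially negative in the graph's size. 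I expect the main obstacle to be verifying the knife-edge calibration in the inductive step, namely that the cost sequence $c(v_i) = \lambda R(1+\lambda)^{i-2}$ is exactly what makes the sunk cost grow by the factor $(1+\lambda)$ per step so that the continuation inequality remains (barely) satisfied at every node; should one be uneasy about relying on exact equality, the chain costs can be perturbed downward by an infinitesimal amount to make every continuation strict without affecting the asymptotics.
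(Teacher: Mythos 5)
Your proof is correct and follows essentially the same route as the paper's: the same construction from Figure~\ref{fig:exp_neg_expected}, the same induction showing the sunk cost upon reaching $v_i$ equals $R(1+\lambda)^{i-2}$ so that the naive continuation condition holds with equality under tie-breaking toward continuing, and the same geometric-series accounting of the chain costs. Your final value $\frac{R}{4}\bigl[1-(1+\lambda)^{n-2}\bigr]$ is in fact the exact expected payoff, marginally more precise than the paper's stated $-\frac{1}{4}R(1+\lambda)^{n-2}$ (which drops the additive $\frac{R}{4}$ term), and both are exponentially negative as claimed.
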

\begin{proof}
	Consider the instance depicted in Figure \ref{fig:exp_neg_expected}. Observe that a naive agent will choose to continue at $s$ since the expected payoff is $0$.\footnote{We assume that for naive agents ties are broken in favor of continuing.} If it ends up at $v_1$ it will again choose to continue, for the same reason, for any value of $\lambda$. Now for any $2\leq i \leq n-1$, the agent at $v_i$ will continue to $v_{i+1}$ since it will accumulate a cost of
%	\full{
	%\begin{align*}
	%	R+\sum_{j=0}^{i-3} \lambda R(1+\lambda)^{j}  &= R\left(1+\lambda R \cdot \frac{(1+\lambda)^{i-2}-1}{\lambda}\right)
	%	\\
	%&=  R \cdot (1+\lambda)^{i-2}
	%\end{align*}
%}
%	\aaai{
	\begin{align*}
	%{\textstyle
		R+\sum_{j=0}^{i-3} \lambda R(1+\lambda)^{j} =  R \cdot (1+\lambda)^{i-2}
	%}
	\end{align*}
%}
	 Thus, it will have a perceived cost of stopping of $ \lambda R \cdot (1+\lambda)^{i-2}$ which is the same as the cost it thinks it will have for continuing one step and then stopping. Since the expected payoff of the two first steps is $0$, we get that the expected payoff of the naive agent is 
	 %$R-R \cdot (1+\lambda)^{n-2}$.
	 ${-\frac{1}{4}{R}(1+\lambda)^{n-2}}$.
\end{proof}

\section{Sophisticated Sunk Cost Bias Agent} \label{sec-soph} %\nscomment{I've tried to add some explanations according to the reviews.} \nsedit{Even though a sophisticated is aware of its bias, it cannot mitigate it. In contrast to naive agent, a sophisticated sunk cost bias takes future actions to minimize the negative implications of its bias. The sophisticated agent takes deliberate actions in the present to avoid subjecting itself to temptation in the future. Corresponding to our model, the sophisticated agent plans to traverse the graph if and only if its future self would not end up with negative (actual) expected utility due to his bias. Formally, a} 

{Recall that a sophisticated agent is aware of its bias and hence it can compute its \emph{actual} expected payoff for continuing traversing the graph. Hence,} a sophisticated agent at node $u$ that accumulated a sunk cost of $K$ decides to continue if its (actual) expected payoff is {at least} $-\lambda K$. For this reason, a sophisticated agent will only traverse the graph if its expected payoff is non-negative. %and hence its expected payoff is always non-negative. 
Nevertheless, for any
$W$, we can construct an instance in which a sophisticated agent has an expected payoff of $0$ while the optimal agent has an expected payoff of $W>0$. Consider Figure~\ref{fig:scb_soph}, with $R = 10W$ and $\lambda = 0.5$. An optimal agent would continue at $s$ and stop at $u$, with expected payoff
$W$. However, a sophisticated agent would continue at $u$ as well, so its
expected payoff would be $0$. 
\begin{figure}[t]
	\centering
	\begin{tikzpicture}[->,shorten >=1pt,auto,node distance=2cm, thin]
	\node (s) [circle, draw, inner sep=0pt,minimum size=1.1cm] at (0,0) {\small $s;4W$};
	\node (u) [circle, draw, inner sep=0pt,minimum size=1.1cm] at (1.5,1.5) {\small $u;7W$};
	\node (v) [circle, draw, inner sep=0pt, minimum size=1.1cm] at (3,0) {\small $v$};
	\node (t) [circle, draw, inner sep=0pt, minimum size=1.1cm] at (1.5,-1.5) {\small $t$};
	
	\path[every node/.style={sloped,anchor=south,auto=false}]
	(s) edge node {\small $1/2$} (u)
	(s) edge node [below]  {\small $1/2$} (t)
	(u) edge node {\small $1/2$} (v)
	(u) edge node {\small $1/2$} (t)
	;
	\end{tikzpicture}
	\caption{Sophisticated agent gets 0, optimal agent gets $W$.}
	\label{fig:scb_soph}
\end{figure}
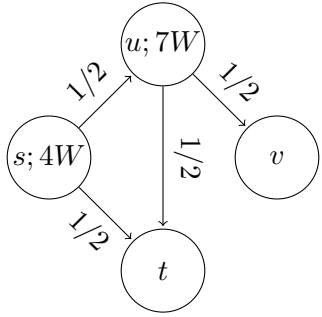

As we discussed, computing a policy for optimal or naive agents is computationally easy. We observe that this is not the case for sophisticated agents. Since a sophisticated agent is aware of its bias, in order to compute its expected payoff 
%A sophisticated agent faces a very complicated decision problem. In order to decide whether or not to traverse the graph 
it should know which action its future selves will take in each of the following nodes. This problem is complicated by the fact that the choice depends on the amount of sunk cost they accumulated in reaching these nodes. 
We observe that there are instances in which computing the expected payoff is computationally hard. Hence, it is computationally hard for a sophisticated agent to compute its policy. 

\begin{proposition} \label{prop-hardness-sophisticated}
	The problem of computing the optimal policy for a sophisticated sunk cost bias
	agent is \#P-hard.
\end{proposition}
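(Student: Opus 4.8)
The plan is to reduce from the counting version of $0$-$1$ knapsack: given item weights $w_1,\dots,w_n \in \mathbb{Z}_{>0}$ and a capacity $B$, count the subsets $S \subseteq \{1,\dots,n\}$ with $\sum_{i\in S} w_i \le B$. This counting problem is \#P-complete, so a polynomial-time construction that lets us read off this count from the value of a sophisticated agent's policy establishes \#P-hardness.

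First I would build a \textbf{subset gadget}: a chain of $n$ layers where layer $i$ begins at a node $a_i$ of cost $0$ whose two out-edges each carry probability $1/2$, one leading to a node of cost $w_i$ (``take item $i$'') and the other to a node of cost $0$ (``skip item $i$''); both reconnect to $a_{i+1}$. Running the agent through this chain realizes each of the $2^n$ subsets $S$ with probability $2^{-n}$, and the sunk cost accumulated upon exiting the chain at a node $f$ is exactly $K_S = \sum_{i\in S} w_i$. The crux of the encoding is a single decision node at $f$ at which continuing is genuinely net-negative, so that an unbiased agent would stop; a sophisticated sunk-cost agent with sunk cost $K_S$ instead continues exactly when $K_S \ge \theta$, for the threshold $\theta$ dictated by its continuation rule (continue iff expected payoff $\ge -\lambda K_S$). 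By choosing the residual cost so that $\theta = B + \tfrac12$ and using integrality of the weights, the agent continues at $f$ if and only if $\sum_{i\in S} w_i > B$, i.e.\ exactly on the knapsack-infeasible subsets.

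I would then engineer the payoffs so that, for a fixed constant $D$, the payoff on every ``continue at $f$'' path differs from the corresponding ``stop'' path by exactly $-D$, independent of $K_S$ (for instance, by routing a continuing agent through one extra deterministic cost-$D$ step into a dead end). The expected payoff evaluated at the start is then an affine function of the number $N_{>} = \#\{S : K_S > B\}$ of infeasible subsets: a path-independent baseline minus $D \cdot N_{>}/2^{n}$. Since $N_{>} = 2^{n} - \#\{S: \sum_{i\in S} w_i \le B\}$, recovering the sophisticated agent's value---which any algorithm computing its optimal policy must implicitly determine in order to fix its behaviour, e.g.\ at $s$ in a scenario like Figure~\ref{fig:scb_soph}---yields the knapsack count by a polynomial-time, invertible transformation, completing the reduction.

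The step I expect to be the main obstacle is guaranteeing that the sophisticated agent actually traverses the \emph{entire} subset gadget, so that all $2^n$ subsets are realized and contribute. Because a sophisticated agent foresees its own future behaviour, a chain consisting only of costs has negative continuation value and the agent would refuse to enter it; I must therefore supply a reward structure (a positive-expected-value opportunity reachable only by continuing through the chain, in the spirit of the direct-to-$t$ edges in Figure~\ref{fig:scb_soph}) calibrated so that the continuation value stays above the $-\lambda K$ bar at every intermediate node and every reachable sunk-cost level, while leaving the decision at $f$ a clean threshold at the knapsack boundary and keeping the continue-vs-stop payoff gap the path-independent constant $D$. Verifying this incentive-compatibility simultaneously with the exact threshold placement, and confirming that the resulting value-to-count map is efficiently invertible, is the delicate part; the branching and threshold mechanics themselves are routine.
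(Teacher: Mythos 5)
Your construction is essentially the paper's: the same reduction source (counting $0$-$1$ knapsack solutions), the same uniform subset gadget that accumulates sunk cost $K_S$ with each subset realized with probability $2^{-n}$, the same single threshold node at which a sunk-cost-biased agent continues exactly when $K_S > B$, and the same affine relation between the start value and the number of (in)feasible subsets. The incentive-compatibility issue you flag as the main obstacle is real but is actually the easy part: in the paper's instantiation the continuation value from any interior node is at least $\frac{R}{2} - \lambda(B+C) - \sum_{i=1}^n w_i$, so taking $R \ge 2\sum_{i=1}^n w_i + 2\lambda B$ keeps it above the $-\lambda C$ bar everywhere, and this coexists with the threshold placement without any delicacy.

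The genuine gap is your final step, passing from hardness of computing the agent's \emph{value} to hardness of computing its \emph{policy}. You assert that any algorithm computing the optimal policy ``must implicitly determine'' the value in order to fix its behaviour at $s$. That inference is invalid: a policy is only a collection of continue/stop bits, i.e.\ outcomes of comparisons of the continuation value against $-\lambda K$, and such comparisons do not reveal the value. Nor can you recover the value by evaluating the returned policy yourself --- in this very instance, evaluating even a known policy requires determining, across the exponentially many sunk-cost levels reaching the threshold node, which branch is taken, which is again the counting problem. The paper closes this gap with an additional construction (Claim~\ref{clm:binary-search}): for any $0<\alpha<1$ one can choose $C$ and $R$ of polynomial bit-length so that $\pi_s = 0$ exactly when $p=\alpha$, hence the single policy bit ``traverse or not at $s$'' answers the predicate $\#\text{Solutions} \ge \alpha \cdot 2^n$; a binary search over $\alpha$, using polynomially many calls to the policy oracle, then pins down the exact count. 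Your sketch needs this calibration-plus-binary-search argument (or an equivalent bridge from policy bits back to the count) before it proves the stated proposition rather than the weaker statement that computing $\pi_s$ is \#P-hard.
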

\begin{proof}
	We use the following reduction
	from knapsack solution counting. We are given a capacity for the knapsack $B$ and $n$ objects with weights $w_1,\ldots w_n$. Our goal is to count the number of subsets $S$ such that $\sum_{i \in S} w_i \leq B$. Given an instance of the counting problem we construct the Markov decision process illustrated in Figure \ref{fig:knapsack_red}.
	
%	 following instance for computing the expected utility  for a sophisticated sunk cost biased
%	agent.  
	\begin{figure*}[t]
		\centering
		\begin{tikzpicture}[->,shorten >=1pt,auto,node distance=2cm, thin]
		\node (s) [circle, draw, inner sep=0pt,minimum size=1pt] at (0,0) {\begin{tabular}{c} $s$
			\\ {$C$} \end{tabular}};
		\node (u1) [circle, draw, inner sep=0pt,minimum size=1pt] at (2,1.5) {\begin{tabular}{c}
			$u_1$ \\ $w_1$ \end{tabular}};
		\node (v1) [circle, draw, inner sep=0pt,minimum size=1pt] at (2,-1.5) {\begin{tabular}{c}
			$v_1$ \\ $0$ \end{tabular}};
		\node (u2) [circle, draw, inner sep=0pt,minimum size=1pt] at (5,1.5) {\begin{tabular}{c}
			$u_2$ \\ $w_2$ \end{tabular}};
		\node (v2) [circle, draw, inner sep=0pt,minimum size=1pt] at (5,-1.5) {\begin{tabular}{c}
			$v_2$ \\ $0$ \end{tabular}};
		\node (u3) [circle, draw, inner sep=0pt,minimum size=1pt] at (8,1.5) {\begin{tabular}{c}
			$u_n$ \\ $w_n$ \end{tabular}};
		\node (v3) [circle, draw, inner sep=0pt,minimum size=1pt] at (8,-1.5) {\begin{tabular}{c}
			$v_n$ \\ $0$ \end{tabular}};
		\node (q) [circle, draw, inner sep=0pt,minimum size=1pt] at (10,0) {\begin{tabular}{c} $q$
			\\ $0$ \end{tabular}};
		\node (q') [ellipse, draw, inner sep=0pt,minimum size=1pt] at (14,0) {\begin{tabular}{c}
			$q'$ \\ $R + \lambda {(B+C)}$ \end{tabular}};
		\node (t) [circle, draw, inner sep=0pt,minimum size=0.8cm] at (12,-1.5) {$t$};
		
			\path[every node/.style={sloped,anchor=south,auto=false}]
		(s) edge node {\small 1/2} (u1)
		(s) edge node {\small 1/2} (v1)
		
		(u1) edge node {\small 1/2} (u2)
		(u1) edge node [above] {\small \hspace{25pt}1/2} (v2)
		(v1) edge node [above] {\small \hspace{-30pt} 1/2} (u2)
		(v1) edge node  {\small 1/2} (v2)
		
		(u2) edge node {\ldots} (u3)
		%	(u2) edge node [right] {0.5} (5,2)
		%	(v2) edge node [left] {0.5} (u3)
		(v2) edge node {\ldots} (v3)
		
		(u3) edge node {\small 1} (q)
		(v3) edge node {\small 1} (q)
		
		(q) edge node {\small 1/2} (q')
		(q) edge node {\small 1/2} (t)
		
		(q') edge node {\small 1} (t)
		;
		\end{tikzpicture}
		\caption{Reduction from knapsack solution counting to computing the payoff of a sophisticated sunk cost
			biased agent.}
		\label{fig:knapsack_red}
	\end{figure*}
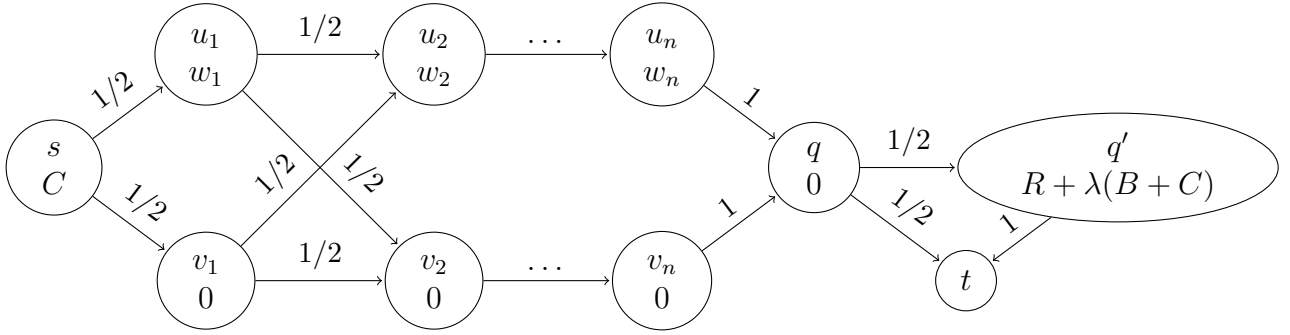
	
	%In Figure~\ref{fig:knapsack_red}, we argue that counting the number of subsets
	%of $\{w_1 \dots w_n\}$ such that the sum of the $w_i$'s is under some budget $B$
	%is equivalent to finding the expected payoff for the agent. 
	
	Let $K$ be the cost incurred after $s$
	and just before $q$, meaning $K$ is the sum of some subset of the $w_i$'s.
	We observe
	that at $q'$, the agent will have a different policy based on the value of $K$. Continuing at $q'$ yields a net payoff of $-\lambda (B + C)$, while the perceived cost of stopping is $-\lambda (K + C)$. Hence, the agent continues at $q'$ if and only if $K > B$. \footnote{For ease of presentation we break ties in favor of not continuing. This can be avoided by introducing small perturbations.}
%which is more than the
%	perceived cost of stopping, which is $-\lambda (K + C)$.
%	
%	If $K \le
%	B$, then the agent will stop at $q'$. However, if $K > B$, then it will continue
%	because this will yield a net payoff of $-\lambda (B + C)$, which is more than the
%	perceived cost of stopping, which is $-\lambda (K + C)$. 	
	Thus, we can separate all
	paths leading to $q$ into two types based on whether the agent would stop or continue at $q'$ if it reaches it. We first prove correctness under the assumption that the agent will always reach $q$ and then show it holds. 
	
	For paths along which the agent would not continue at $q'$, the expected payoff
	is $\pi_s^{\oslash} = \frac{R}{2}  - \E{K|K \le B} - C$. This is because such paths
	reach $t$ when a $q\rightarrow t$ transition is realized (with
	 probability $\frac{1}{2}$). Note that we can complete the reduction without knowing the value of $\E{K|K \le B}$. For paths along which the agent would continue at $q'$ the expected payoff is
{	 \begin{align*}
	\pi_s^{\copyright} = \frac{R}{2} - \E{K|K > B} - \frac{\lambda}{2}(B +C) -C.
	\end{align*}}
This is because these paths always reach $t$, but they incur a cost of $R +
	\lambda (B + C)$ with probability $\frac{1}{2}$. Let $p$ be the probability that $K \le B$. The total expected payoff is then
{	
	\begin{align*}
	\pi_s &= p \cdot \pi_s^{\oslash} + (1-p) \cdot \pi_s^{\copyright} \\
%	&= p \p{\frac{R}{2} - \E{K|K \le B} - C} \\
%	&+ (1-p) \p{\frac{R}{2}  -
%		\E{K|K > B} - \frac{\lambda}{2} B - C(1 + \frac{\lambda}{2})} \\
&= p  \frac{R}{2}  + (1-p)\frac{R}{2} - p\cdot \E{K|K \le B} - (1-p)\cdot \E{K|K > B} \\
&-p\cdot C -(1-p)\cdot \left(\frac{\lambda}{2}(B +C) + C \right) \\
	&= \frac{R}{2} - \frac{1}{2} \sum_{i=1}^n w_i -C ~-\frac{\lambda}{2}(1-p)(B+C)
	\end{align*}
}
%\normalsize
	where in the last step we use:
	%{
		$$ p\cdot \E{K|K \le B} + (1-p)\cdot \E{K|K > B} = \E{K} = \frac{1}{2}\sum_{i = 1}^{n}w_i.$$
	By rearranging we get that:
	%SO-19/2 - removed step
%	{
%	\begin{align*}
%%	\pi_s &= \frac{R}{2}  - \frac{\sum_{i=1}^n w_i}{2} - \frac{\lambda}{2} (1-p) B \\
%	\pi_s  + \frac{\sum_{i=1}^n w_i}{2} - \frac{R}{2} + C = (p-1) \frac{\lambda}{2}(B + C)
%		\end{align*}
%	}
%		This implies that
%%	{	
\begin{align*}
	 p = \frac{2\pi_s + \sum_{i=1}^n w_i - R + 2C}{\lambda (B + C)} + 1
	\end{align*}
%}
	Since the number of paths such that $K \le B$ is $p \cdot 2^n$, and the number of
	such paths is the same as the number of solutions to the knapsack problem, we
	have
	$${
	\text{\# Solutions} = 2^n \p{	\frac{2\pi_s + \sum_{i=1}^n w_i - R + 2C}{\lambda (B + C)} + 1}.}$$
	\noindent Hence, if the sophisticated agent can compute its expected payoff in polynomial time, it can also solve the \#P-hard problem of knapsack solution counting. To complete the reduction, we first show that the agent will always
	reach $q$ if it starts to traverse the graph. To do this, we observe that from any node before $q$ (not including $s$) the expected payoff is at least $\frac{R}{2} -
	\lambda (B + C) - \sum_{i=1}^n w_i$. Therefore, the agent will continue as long as
	\begin{align*}
	{
	\frac{R}{2} -
	\lambda (B + C) - \sum_{i=1}^n w_i {\geq} -\lambda C.}
	\end{align*}
	By rearranging we get the reduction holds for any $R {\geq} 2\sum_{i=1}^n w_i + 2\lambda B $.
	
%	 that as long as we choose $R {\geq} 2\sum_{i=1}^n w_i + 2\lambda B $, the above reduction holds.

	Finally, recall that we need to show that the problem of computing the optimal policy for a sophisticated agents is \#P-hard. This requires us to construct an instance in which the agent has to know the exact expected payoff to compute its policy. We prove the following claim:
	\begin{claim} \label{clm:binary-search}
	For any $0<\alpha<1$ there exists $C$ and $R$ such that a sophisticated agent will traverse the graph if and only if 
	$\# \text{Solutions} \geq \alpha \cdot 2^n$.
	\end{claim} 
\begin{proof}
Let $R = 2\sum_{i=1}^n w_i + 2\lambda B$. We show that for any $0<\alpha<1$ there exists $C$ such that the expected payoff of the sophisticated agent is $0$ if $p=\alpha$. This implies that the agent will traverse the graph if and only if $p\geq \alpha$ as required. Observe that
%Given the previous discussion we need to show that there exists $C$ such that the following holds:
\begin{align*}
\pi_s &= {\frac{R}{2} - \frac{1}{2}\sum_{i = 1}^{n} w_i -C -\frac{\lambda}{2}(1-\alpha)(B+C) = 0} \\
\implies C &={\frac{R - \sum_{i = 1}^{n} w_i -\lambda(1-\alpha)B}{2 + \lambda(1 - \alpha)}} 
\end{align*}
 % +\eps$, where $\eps>0$ is some small constant. 
%\nscomment{This yields $C = \frac{\sum_{i = 1}^{n}w_i +\lambda B(1 + \alpha) + \varepsilon}{2 + \lambda(1-\alpha)}$ which is positive and defined}.
Finally, notice that both $R$ and $C$ have a polynomial representation
as they are defined by arithmetic operations over numbers that their representation is polynomial in the problem's size.
\end{proof}

	The claim implies the proof of Proposition \ref{prop-hardness-sophisticated} since if the sophisticated agent can compute its optimal policy for any instance constructed for $0<\alpha<1$, we can use its poly-time algorithm to run a binary search over the fractions of valid solutions. Since the size of the search space is $2^n$, our binary search will compute the exact number of solutions to the knapsack problem in polynomial time.
\end{proof}

Note that even though a naive agent can efficiently decide on its actions (as it plans to behave optimally), the same reduction as in the proof of Proposition \ref{prop-hardness-sophisticated} shows that computing the expected payoff of a naive agent is also \#P-Hard.

\subsection{Bounding the Payoff of  a Sophisticated Agent}
In this section, we try to further understand how much smaller the payoff of a sophisticated agent may be relative to the payoff of an optimal agent. As discussed in the example in Figure \ref{fig:scb_soph}, we cannot have a multiplicative bound here as in some cases the payoff of the optimal agent can be positive while the payoff of a sophisticated agent is $0$. {As the payoff of an optimal agent is at most $R$ we have that for any value of $\lambda\geq 0$, $\pi_s\geq \pi_o-R$. In fact, this bound is asymptotically tight even for a $3$-node graph as $\lambda$ is approaching infinity. We prove this by showing in Appendix \ref{app-3-nodes} that for any value of $\lambda\geq0$ there exists a 3-node instance such that, $\pi_s = \pi_o - \frac{2+\lambda -2\sqrt{1+\lambda}}{\lambda} \cdot R$. This implies that there exists a family of instances such that as $\lambda$ is approaching infinity $\pi_o-\pi_s$ is approaching $R$.
While the trivial bound $\pi_s\geq \pi_o-R$ is asymptotically tight for very large values of $\lambda$, as we will see next, it is far from tight for smaller values of $\lambda$. Hence, for the rest of the paper, we look for tighter bounds on the payoff of sophisticated agents.
}

To help analyze the payoff of the sophisticated agent we define an auxiliary ``hybrid'' agent with the following behavior: it continues wherever
the optimal agent continues until it either reaches the goal or the optimal
agent stops. If the optimal agent stops, then it does what the sophisticated
agent would have done. We show that the payoff of the sophisticated agent is always higher than the expected payoff of the ``hybrid'' agent and then provide bounds on the expected payoff of a hybrid agent.

\begin{proposition} \label{prop-hyb}
The expected payoff of a sophisticated agent is at least the expected payoff of a hybrid agent.
\end{proposition}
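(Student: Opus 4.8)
The plan is to prove a pointwise, state-by-state domination and then specialize to the start node. For a node $u$ and an accumulated sunk cost $K$, write $\pi_s(u,K)$ and $\pi_h(u,K)$ for the \emph{actual} expected payoff-to-go of the sophisticated and hybrid agents, respectively, conditioned on being at $u$ having already paid a sunk cost of $K$. Let $g_s(u,K) = \sum_{v \in N(u)} p(u,v)\,\pi_s(v,K+c(u)) - c(u)$ denote the sophisticated agent's true expected payoff from continuing at $u$, so that by definition $\pi_s(u,K) = g_s(u,K)$ when $g_s(u,K) \geq -\lambda K$ (it continues) and $\pi_s(u,K) = 0$ otherwise (it stops). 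I would prove by induction in reverse topological order that $\pi_s(u,K) \geq \pi_h(u,K)$ for every node $u$ and every $K \geq 0$; evaluating this at the start node $s$ with $K=0$ then gives $\pi_s \geq \pi_h$. The base case is the target $t$ (and any terminal node), where both agents collect the same value.

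For the inductive step, fix $u$ and $K$ and assume $\pi_s(v,K') \geq \pi_h(v,K')$ for every successor $v$ of $u$ and every $K'$. I would split on the optimal agent's action at $u$, since this is what governs the hybrid agent (its choice depends only on $u$, not on $K$). If the optimal agent stops at $u$, then by construction the hybrid agent mimics the sophisticated agent from $u$ onward, so $\pi_h(u,K) = \pi_s(u,K)$ and the claim holds with equality. If the optimal agent continues at $u$, the hybrid agent continues as well, giving $\pi_h(u,K) = \sum_{v \in N(u)} p(u,v)\,\pi_h(v,K+c(u)) - c(u)$; I then sub-split on what the sophisticated agent does. When the sophisticated agent also continues, $\pi_s(u,K) = g_s(u,K)$, and applying the inductive hypothesis term-by-term gives $g_s(u,K) \geq \sum_{v \in N(u)} p(u,v)\,\pi_h(v,K+c(u)) - c(u) = \pi_h(u,K)$, as needed.

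The one delicate case --- and the step I expect to be the crux --- is when the optimal (hence hybrid) agent continues at $u$ but the sophisticated agent stops, so that $\pi_s(u,K) = 0$ while the hybrid keeps going. Here the comparison runs in the \emph{wrong} direction, since the hybrid is accruing a continuation value that I must bound above by $0$. The key observation is that the sophisticated agent stops precisely when $g_s(u,K) < -\lambda K$, and since $\lambda \geq 0$ and $K \geq 0$ we have $-\lambda K \leq 0$, hence $g_s(u,K) < 0$. Applying the inductive hypothesis to each successor then yields $\pi_h(u,K) = \sum_{v \in N(u)} p(u,v)\,\pi_h(v,K+c(u)) - c(u) \leq \sum_{v \in N(u)} p(u,v)\,\pi_s(v,K+c(u)) - c(u) = g_s(u,K) < 0 = \pi_s(u,K)$. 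Thus, exactly when the sophisticated agent bails out, its continuation would have been unprofitable, and because the hybrid's continuation is dominated term-by-term by the sophisticated one, the hybrid incurs that loss while the sophisticated agent escapes it. This closes the induction and establishes $\pi_s \geq \pi_h$.
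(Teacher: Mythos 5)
Your proof is correct and takes essentially the same approach as the paper: the paper phrases it as a minimal-counterexample argument (taking the last node in topological order where $\pi_h(u,K) > \pi_s(u,K)$ could hold), which is just the contrapositive packaging of your reverse-topological induction. Your case analysis --- equality when the optimal agent stops, term-by-term domination when both continue, and the crux case where the sophisticated agent's stopping threshold $-\lambda K \leq 0$ forces its (and hence the hybrid's) continuation value below $0$ --- mirrors the paper's proof exactly.
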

\begin{proof}
We label the nodes of the graph according to the manner in which the hybrid agents behaves: nodes that it behaves as an optimal agent are labeled by $o$ and those at which it behaves as a sophisticated agent by $s$. In particular, nodes that the optimal agent reaches and decides to continue are labeled by $o$ and the rest of the nodes are labeled by $s$. Since the decisions of the optimal agent do not depend on the path leading to a node this is a well defined distinction. Moreover, once the hybrid agent starts behaving as a sophisticated agent it would keep doing so until it either reaches the target or stops. Thus, starting from any $s$-labeled node the sophisticated and hybrid agents will behave the same.

{
Denote the payoff of the hybrid agent by $\pi_h$. We show that for every $o$-labeled node $u$ and every $K$ which is the weight of a path from $s$ to $u$ we have that $\pi_h(u,K) \leq \pi_s(u,K)$. Assume towards a contradiction that there exists an $o$-labeled node $u$ and a path reaching this node of cost $K$ such that $\pi_h(u,K) > \pi_s(u,K)$. If there is more than one such node, let $u$ be the last such node in the topological order. This implies that for every node $v$ subsequent to it and any $K'>0$ a cost of a path reaching $v$ we have that $\pi_h(v,K') \leq \pi_s(v,K')$.}

%Such a node exists as for any $s$-labeled node $z$ and for any $K'>0$ we have that $\pi_h(z,K') = \pi_s(z,K')$. 
Since in all $o$-labeled nodes the optimal agent decides to continue there are two cases. In the first case, at $u$ the sophisticated agent decides to continue as well, thus the expected payoff is the weighted average over all
successor nodes, and hence $\pi_h(u,K) \leq \pi_s(u,K)$.  In
the second case, the sophisticated agent stops and the hybrid agent continues.
We know that the expected payoff of the sophisticated agent if
it continued would be non-positive, as the sophisticated agent decides to
stop, and the expected payoff of the hybrid agent is the weighted average over
all successors. Again, by our assumption on subsequent nodes, we have that
% that for any subsequent node $v$ and any $K'>0$ we have that $\pi_h(v,K') \leq \pi_s(v,K')$,
the weighted average over
these successors for the hybrid agent must be no more than the same weighted
average for the sophisticated agent, which we know is no more than 0. Thus,
$\pi_h(u,K)\leq \pi_s(u,K)$.
\end{proof}

In order to lower-bound the payoff of a sophisticated agent, we now take a closer look at the payoff of an optimal agent.  Let $S$ denote the event in which the optimal agent reaches $t$ (starting from $s$) and $p(S)$ the probability of this event. The payoff of an optimal agent is: $\pi_o = p(S)\cdot R -E[C]$, where $C$ is a random variable equal to the cost that the optimal agent incurred. By decoupling the event in which the optimal agent reaches the target and does not reach the target:
\begin{align*}
{\pi_o = p(S)\cdot R -p(S)\cdot E[C|S] - (1-p(S))\cdot E[C|\bar S]}
\end{align*}
Based on this we can bound the payoff of the hybrid agent:
\begin{lemma}
	 $\pi_h \geq \pi_o - \lambda\cdot (1-p(S))\cdot E[C|\bar S]$
\end{lemma}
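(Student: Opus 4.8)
The plan is to condition on the behavior of the optimal agent and exploit the definition of the hybrid agent together with the decision rule of the sophisticated agent. Recall from the proof of Proposition~\ref{prop-hyb} that the optimal agent's decision at each node is path-independent, so the event $S$ (the optimal agent reaches $t$) and its complement $\bar S$ partition the trajectories cleanly, and on $\bar S$ there is a well-defined node $w$ at which the optimal agent stops. Since the hybrid agent follows the optimal policy up to $w$ (or all the way to $t$), I would decompose, by the law of total expectation,
\[ \pi_h = p(S)\cdot E[\,\text{hybrid payoff}\mid S\,] + (1-p(S))\cdot E[\,\text{hybrid payoff}\mid \bar S\,]. \]
On the event $S$ the hybrid agent behaves exactly like the optimal agent all the way to $t$, so its conditional payoff is precisely $R - E[C\mid S]$, matching the corresponding term in the decomposition $\pi_o = p(S)(R - E[C\mid S]) - (1-p(S))E[C\mid\bar S]$ stated just above the lemma.

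The crux is the event $\bar S$. Here the hybrid agent reaches the stopping node $w$ having paid exactly the sunk cost $K = C$ (the same total cost the optimal agent incurs, since the optimal agent stops at $w$ and pays nothing further), and from $w$ onward it behaves as the sophisticated agent. I would write the hybrid agent's conditional payoff for such a trajectory as the already-paid cost plus the sophisticated continuation value, $-K + \pi_s(w,K)$, and then invoke the sophisticated agent's decision rule: at $w$ with sunk cost $K$ it continues only when its expected payoff is at least $-\lambda K$, and if it stops it receives $0 \geq -\lambda K$. Hence $\pi_s(w,K) \geq -\lambda K$ holds pointwise for every realized trajectory in $\bar S$, so its conditional payoff is at least $-K - \lambda K$. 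Taking expectations over $\bar S$ gives $E[\,\text{hybrid payoff}\mid \bar S\,] \geq -(1+\lambda)\,E[C\mid\bar S]$.

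Combining the two cases, $\pi_h \geq p(S)(R - E[C\mid S]) - (1-p(S))(1+\lambda)E[C\mid\bar S]$, and substituting the decomposition of $\pi_o$ cancels the common terms and leaves exactly $\pi_h \geq \pi_o - \lambda(1-p(S))E[C\mid\bar S]$, as claimed. The main obstacle I anticipate is the bookkeeping in the event $\bar S$: one must verify carefully that the sunk cost accumulated by the hybrid agent when it arrives at $w$ coincides with the optimal agent's total cost $C$, and that the $-\lambda K$ lower bound on the sophisticated continuation applies pointwise (per realized trajectory and stopping node, which may vary across trajectories) \emph{before} taking expectations. Once these points are secured, the remaining algebra is routine.
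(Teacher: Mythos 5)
Your proof is correct and follows essentially the same route as the paper's: condition on $S$ versus $\bar S$, observe that the hybrid agent coincides with the optimal agent on $S$, and on $\bar S$ use the sophisticated decision rule to get $\pi_s(w,K) \geq -\lambda K$ pointwise, hence a per-trajectory payoff of at least $-(1+\lambda)K$, which after taking expectations and combining yields the bound. If anything, your treatment of the $\bar S$ case is slightly more careful than the paper's, which loosely describes the consequence of the decision rule as ``the extra expected cost for continuing is at most $\lambda K$'' when the quantity bounded is really the continuation \emph{payoff} (which includes possible rewards); your pointwise formulation $-K + \pi_s(w,K) \geq -(1+\lambda)K$ avoids that imprecision.
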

\begin{proof} The hybrid agent reaches the target with probability at least $p(S)$ since whenever the optimal agent reaches the target the hybrid agent reaches it as well. In this case its expected cost would be $E[C|S]$. Wherever the optimal agent 
	stops, the hybrid agent follows the sophisticated agent's actions. Assume that
	the hybrid agent has incurred some cost $K$ in reaching a node $u$ at which
	the optimal agent stops. At this point, using the decision rule for the
	sophisticated agent, we know that $\pi_s(u) \ge -\lambda K$. In particular, this implies 
	that the extra expected cost for continuing is at most $\lambda K$. Thus, the hybrid
	agent can incur a total cost of at most $(1+\lambda)K$ along this path. 
	Taking this as an expectation over all paths in which the optimal agent stopped prematurely, we find that the expected cost incurred
	by the hybrid agent is at most $(1+\lambda)E[C|\bar S]$. Thus, the expected payoff of
	the hybrid agent is at least:
	\begin{align*}
	{
	p(S) \cdot R -p(S)\cdot E[C|S] -(1-p(S))(1+\lambda )E[C|\bar S] 
}
%	\\ &=\pi_o - \lambda\cdot (1-p(S))\cdot E[C|\bar S]
	\end{align*} 
	which equals $\pi_o - \lambda\cdot (1-p(S))\cdot E[C|\bar S].$
\end{proof}

By applying Proposition \ref{prop-hyb} we conclude that
\begin{corollary} \label{cor-hyb}
$\pi_s \geq \pi_o - \lambda\cdot (1-p(S))\cdot E[C|\bar S]$.
\end{corollary}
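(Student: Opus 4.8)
The plan is to obtain the corollary by simple transitivity, since both ingredients are already established. Proposition~\ref{prop-hyb} gives $\pi_s \geq \pi_h$ (the payoffs evaluated from the start node $s$ with zero accumulated sunk cost), and the preceding Lemma gives $\pi_h \geq \pi_o - \lambda \cdot (1-p(S)) \cdot E[C|\bar S]$. Chaining these two inequalities immediately yields
$$\pi_s \geq \pi_h \geq \pi_o - \lambda \cdot (1-p(S)) \cdot E[C|\bar S],$$
which is exactly the claimed bound.

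The one point I would check carefully is that both results are stated for the same quantity, namely the expected payoff measured from the start node $s$. Proposition~\ref{prop-hyb} is proved node-by-node as $\pi_h(u,K) \leq \pi_s(u,K)$ for every $o$-labeled node $u$ and every reachable sunk cost $K$; specializing to $u = s$ and $K = 0$ gives precisely $\pi_h \leq \pi_s$, so the two inequalities compose directly without any mismatch in the reference point or in the conditioning of the cost random variable. Since no further computation is required, I do not expect a genuine obstacle here — the corollary is purely a bookkeeping combination of the hybrid-agent machinery developed in the Lemma and Proposition~\ref{prop-hyb}, and its only role is to transfer the clean bound proved for the (more tractable) hybrid agent onto the sophisticated agent.
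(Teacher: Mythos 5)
Your proposal is correct and matches the paper's own reasoning exactly: the corollary is stated in the paper as an immediate consequence of chaining Proposition~\ref{prop-hyb} with the preceding lemma on the hybrid agent's payoff. Your extra care in checking that both inequalities refer to the payoff from the start node $s$ with zero accumulated sunk cost is a sound sanity check, but no additional argument is needed.
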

Next, we derive two more specific bounds: %by bounding $(1-p(S))\cdot E[C|\bar S]$.
\begin{claim}
$\pi_s \geq \pi_o - \lambda\cdot p(S)\cdot R$.
\end{claim}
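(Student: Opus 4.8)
The plan is to obtain the bound directly from Corollary~\ref{cor-hyb}, which already gives $\pi_s \geq \pi_o - \lambda\cdot(1-p(S))\cdot E[C|\bar S]$. Since $\lambda \geq 0$, it suffices to show that the penalty term appearing there is no larger than the one in the statement, i.e. that $(1-p(S))\cdot E[C|\bar S] \leq p(S)\cdot R$. Granting this, multiplying through by $\lambda$ and substituting into the corollary yields $\pi_s \geq \pi_o - \lambda\cdot p(S)\cdot R$, as desired.

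To prove the inequality $(1-p(S))\cdot E[C|\bar S] \leq p(S)\cdot R$, I would start from the payoff decomposition of the optimal agent, $\pi_o = p(S)\cdot R - p(S)\cdot E[C|S] - (1-p(S))\cdot E[C|\bar S]$. The key observation is that $\pi_o \geq 0$: this is immediate from the defining recursion $\pi_o(u) = \max\{\cdot, 0\}$, since the optimal agent can always choose to stop at $s$ for a payoff of $0$. Rearranging $\pi_o \geq 0$ gives $(1-p(S))\cdot E[C|\bar S] \leq p(S)\cdot R - p(S)\cdot E[C|S]$. Because costs are non-negative, the expected cost $E[C|S]$ accumulated along runs that reach $t$ is itself non-negative, so $p(S)\cdot E[C|S] \geq 0$ and we may simply drop this term to conclude $(1-p(S))\cdot E[C|\bar S] \leq p(S)\cdot R$.

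I expect essentially no serious obstacle here: the claim is really just a reinterpretation of Corollary~\ref{cor-hyb} once one notices that the optimal agent's non-negative payoff forces its expected cost to be at most $p(S)\cdot R$. The only point meriting a sanity check is the degenerate regime $p(S)=0$ (the optimal agent never reaches $t$): there $\pi_o = -E[C] \geq 0$ together with non-negativity of costs forces $E[C]=0$, so both sides of the inequality vanish and the conclusion holds trivially. No new construction or auxiliary agent is required beyond the hybrid agent and the decomposition of $\pi_o$ already introduced above.
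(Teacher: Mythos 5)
Your proposal is correct and follows essentially the same route as the paper: invoke Corollary~\ref{cor-hyb}, use the non-negativity of $\pi_o$ together with the decomposition $\pi_o = p(S)\cdot R - p(S)\cdot E[C|S] - (1-p(S))\cdot E[C|\bar S]$, and drop the non-negative term $p(S)\cdot E[C|S]$ to get $(1-p(S))\cdot E[C|\bar S] \leq p(S)\cdot R$. The added sanity check for the degenerate case $p(S)=0$ is harmless but not needed, since the argument never divides by $p(S)$.
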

\begin{proof}
As we know that optimal agent always has a {non-negative} payoff we have that:
\begin{align*}
p(S)&\cdot R -p(S)\cdot E[C|S] - (1-p(S))\cdot E[C|\bar S] \geq 0 \\
&\implies (1-p(S))\cdot E[C|\bar S] \leq p(S)\cdot R -p(S)\cdot E[C|S] \\
&\implies (1-p(S))\cdot E[C|\bar S] \leq p(S)\cdot R 
\end{align*}

By applying Corollary \ref{cor-hyb} we have that
%\begin{align*}
${\pi_s \geq \pi_o - \lambda\cdot (1-p(S))\cdot E[C|\bar S] \geq \pi_o - \lambda\cdot p(S)\cdot R}$ 
%\end{align*}
\end{proof}

We can get a closed form bound by using the fact that $\pi_s\geq 0$:
\begin{proposition} \label{prop:not-tight-bound-soph}
	$\pi_s \geq \pi_o - \frac{\lambda}{1+\lambda} \cdot R$.
\end{proposition}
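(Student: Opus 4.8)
The plan is to combine the two facts already in hand: the hybrid-based lower bound of Corollary~\ref{cor-hyb}, namely $\pi_s \geq \pi_o - \lambda \cdot (1-p(S)) \cdot E[C|\bar S]$, and the observation that a sophisticated agent never incurs a negative payoff, i.e.\ $\pi_s \geq 0$. The first of these is the binding constraint when $\pi_o$ is large, while the second takes over when $\pi_o$ is small; the whole task is to re-express the unknown ``wasted cost'' quantity $(1-p(S)) \cdot E[C|\bar S]$ in terms of $\pi_o$ and $R$ so that Corollary~\ref{cor-hyb} becomes a genuinely closed-form bound.

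First I would bound the wasted-cost term. Starting from the payoff decomposition $\pi_o = p(S)\cdot R - p(S)\cdot E[C|S] - (1-p(S))\cdot E[C|\bar S]$ and using both $p(S) \leq 1$ and the non-negativity of the conditional costs, I obtain $(1-p(S))\cdot E[C|\bar S] \leq R - \pi_o$. Substituting this into Corollary~\ref{cor-hyb} gives the linear lower bound $\pi_s \geq \pi_o - \lambda(R - \pi_o) = (1+\lambda)\pi_o - \lambda R$.

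At this point I have two valid lower bounds on $\pi_s$: the line $(1+\lambda)\pi_o - \lambda R$ and the constant $0$. Since $\pi_s$ dominates each of them it dominates any convex combination, so taking weight $\tfrac{1}{1+\lambda}$ on the line and $\tfrac{\lambda}{1+\lambda}$ on the constant yields $\pi_s \geq \tfrac{1}{1+\lambda}\big((1+\lambda)\pi_o - \lambda R\big) = \pi_o - \tfrac{\lambda}{1+\lambda} R$, which is exactly the claim. Equivalently, one can split into the case $\pi_o \leq \tfrac{\lambda}{1+\lambda}R$, where $\pi_s \geq 0$ already suffices because the target is non-positive, and the case $\pi_o > \tfrac{\lambda}{1+\lambda}R$, where a short calculation shows the linear bound is the larger of the two and lies above $\pi_o - \tfrac{\lambda}{1+\lambda}R$.

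The only real obstacle is the first step: one must bound $(1-p(S))\cdot E[C|\bar S]$ by $R - \pi_o$ rather than by the cruder $p(S)\cdot R$ used in the preceding Claim. That cruder estimate, combined with $p(S) \leq 1$, only recovers the trivial $\pi_s \geq \pi_o - \lambda R$, so it is essential to route the wasted cost through the payoff decomposition and thereby capture the $-\pi_o$ savings. Everything after that is a one-line algebraic combination of the two bounds.
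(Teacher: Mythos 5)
Your proof is correct and takes essentially the same approach as the paper's: both arguments combine Corollary~\ref{cor-hyb} with $\pi_s \geq 0$, and both bound the wasted-cost term $(1-p(S))\cdot E[C|\bar S]$ via the payoff decomposition together with $p(S)\left(R - E[C|S]\right) \leq R$. The only difference is packaging --- the paper splits into cases according to whether $\pi_o \leq \frac{\lambda}{1+\lambda}\cdot R$, whereas you first derive the linear bound $\pi_s \geq (1+\lambda)\pi_o - \lambda R$ and then take a convex combination with $\pi_s \geq 0$, which (as your own closing remark observes) is an algebraically equivalent reorganization of that same case analysis.
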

\begin{proof}
	If $\pi_o \leq \frac{\lambda}{1+\lambda} \cdot R$, then the claim holds simply because $\pi_s\geq 0$. Else, $\pi_o \geq \frac{\lambda}{1+\lambda} \cdot R$. Thus we have that:
	\begin{align*}
	\pi_o &= p(S)\cdot R -p(S)\cdot E[C|S] - (1-p(S))\cdot E[C|\bar S] \geq \frac{\lambda}{1+\lambda} \cdot R 
	\end{align*}
	By rearranging we get that:
	\begin{align*}
(1-p(S))\cdot E[C|\bar S] &\leq p(S)(R- E[C|S]) - \frac{\lambda}{1+\lambda} \cdot R \\
 &\leq R - \frac{\lambda}{1+\lambda} \cdot R \leq  \frac{1}{1+\lambda} \cdot R
	\end{align*}
This implies that $\pi_s \geq \pi_o - \frac{\lambda}{1+\lambda} \cdot R$ as required.
\end{proof}

In 
Appendix \ref{app-3-nodes} 
%the supplementary material
we provide a complete analysis of $3$-node graphs and show that %$\pi_s \geq \pi_o - \frac{(2+\lambda -2\sqrt{1+\lambda})}{\lambda} \cdot R$
for such graphs this bound is not tight. {This is done by showing that
for any $3$-node graph $\pi_s \geq \pi_o - \frac{2+\lambda -2\sqrt{1+\lambda}}{\lambda} \cdot R$ and noting that for any $\lambda\geq 0$ we have that $\frac{\lambda}{1+\lambda} > \frac{2+\lambda -2\sqrt{1+\lambda}}{\lambda}$. The gap between the two bounds increases as $\lambda$ approaches $0$, hence, in the next section we focus on the case of $0\leq \lambda \leq 1$ and present a tighter bound for this specific and natural topology. We present some evidence that this bound is, in a sense, asymptotically tight. It is interesting to note that the bound $\pi_s \geq \pi_o - \frac{\lambda}{1+\lambda} \cdot R$ is tight for any value of $\lambda\geq 0$ in an extended model} in which the cost for continuing at a certain node may depend on the edge that was taken. In other words, the costs are on the edges instead of on the nodes. We provide this proof in %the supplementary material 
Appendix \ref{app-3-nodes} as well.
\section{Analysis of the Fan Graph} \label{sec-fan} 
In this section we focus on $0 \leq \lambda \leq 1$, motivated by our goal of finding tighter bounds when $\lambda$ is bounded. We consider a specific graph topology which is relatively easy to analyze: the fan. A fan graph consists of a path $(s=v_1,v_2,\ldots,v_n,t)$ such that for any $1\leq i<n$: $p(v_i,v_{i+1}) = p_i,~p(v_i,t) = 1-p_i$ and $c(v_i)=c_i$. Also, $p(v_n,t)=1$ and $c(v_n)=c_n$. A sketch of a \emph{specific} fan is depicted in  Figure \ref{fig:fan-tight}.
{Fan graphs can be used to capture, for example, scenarios of project development: at every step some cost should be invested to continue the project and then with some probability it is successful. }

For fans we derive an essentially asymptotically tight bound on the difference between the payoff of a sophisticated agent and an optimal agent. We present the bound and then demonstrate its tightness:
\begin{theorem} \label{thm-fan-bound}
	In every fan graph $\pi_s \geq \pi_o - \lambda \cdot(1-\frac{1}{n})^{n} \cdot R$.
\end{theorem}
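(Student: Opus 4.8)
The plan is to start from the hybrid-agent bound in Corollary \ref{cor-hyb}, which gives $\pi_s \ge \pi_o - \lambda \cdot (1-p(S)) \cdot E[C|\bar S]$. Since $\lambda \ge 0$ appears only as a nonnegative multiplicative factor, it suffices to prove the purely structural inequality $(1-p(S)) \cdot E[C|\bar S] \le (1-\tfrac1n)^n R$ for every fan; note this automatically covers all $\lambda \ge 0$, not merely $\lambda \le 1$. The first step exploits the geometry of the fan: an optimal agent travels along the spine $v_1, v_2, \ldots$, and since its decisions ignore sunk cost, it continues exactly until the first node $v_m$ with $\pi_o(v_m)=0$ (if no such node exists it always reaches $t$, so $\bar S=\emptyset$ and the statement is trivial). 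The crucial observation is that on the event $\bar S$ the realized trajectory is \emph{forced}: the agent can fail to reach $t$ only by arriving at $v_m$ along the unique spine path $v_1 \to \cdots \to v_m$. Hence $1-p(S) = q_{m-1} := \prod_{i=1}^{m-1}p_i$ and $E[C|\bar S] = \sum_{i=1}^{m-1}c_i$ deterministically, so the quantity to bound is exactly $q_{m-1}\sum_{i=1}^{m-1}c_i$.

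The second step bounds the total cost along the surviving path. Writing $R_j := \pi_o(v_j)\ge 0$ (with $R_m=0$) and using the recurrence $c_j = (1-p_j)R + p_j R_{j+1} - R_j$, I would telescope $\sum_{j=1}^{m-1}c_j = R\sum_{j=1}^{m-1}(1-p_j) + \sum_{j=1}^{m-1}(p_j R_{j+1}-R_j)$ and show the second sum equals $-R_1 - \sum_{j=2}^{m-1}(1-p_{j-1})R_j \le 0$, which yields $\sum_{i=1}^{m-1}c_i \le R\sum_{i=1}^{m-1}(1-p_i)$. I expect this to be the main obstacle: the naive bound that uses only $\pi_o(v_1)\ge 0$ (namely $\sum_i q_{i-1}c_i \le (1-q_{m-1})R$) is too weak, since it controls a probability-weighted cost sum rather than $\sum_i c_i$ itself; one genuinely needs \emph{every} per-node constraint $\pi_o(v_j)\ge 0$, and the telescoping is what converts all of them into the clean unweighted inequality. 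Equivalently, the worst case is when the optimal agent is indifferent at every node, so $c_i = (1-p_i)R$.

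The final step is a short optimization. Combining the previous two steps, $(1-p(S))E[C|\bar S] \le R\cdot\bigl(\prod_{i=1}^{k}p_i\bigr)\bigl(\sum_{i=1}^{k}(1-p_i)\bigr)$ with $k=m-1\le n-1$. For fixed $k$ the maximizer has all $p_i$ equal (interior stationarity forces $p_j = \sum_i(1-p_i)$ for every $j$, and the boundary points are dominated: $p_j=0$ kills the product while $p_j=1$ reduces to a smaller-$k$ instance), giving $k\,p^{k}(1-p)$, maximized at $p = k/(k+1)$ with value $(1-\tfrac{1}{k+1})^{k+1}R$. Since $(1-\tfrac1\ell)^{\ell}$ increases in $\ell$ toward $1/e$, the maximum over $k\le n-1$ is attained at $k=n-1$, i.e.\ $\ell=n$, giving $(1-\tfrac1n)^{n}R$. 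Plugging back into Corollary \ref{cor-hyb} completes the proof, and the $n\to\infty$ limit recovers the $\tfrac1e\lambda R$ bound advertised in the introduction.
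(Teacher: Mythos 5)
Your proof is correct and takes essentially the same route as the paper's: Corollary~\ref{cor-hyb}, then the observation that the fan has a unique failing path so that $(1-p(S))\cdot E[C|\bar S]$ is a product of the spine probability and the total spine cost, then the cost bound obtained from per-node non-negativity of the optimal agent's value (your telescoping identity is exactly the paper's Lemma~\ref{lem:fan-upper-expected}, proved there by backward induction), and finally the same symmetric optimization (the paper's Claim~\ref{clm:optimizing}) with the worst case at stopping index $n$. The differences are cosmetic: telescoping versus induction for the key lemma, and your slightly more explicit treatment of boundary cases in the optimization step.
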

\begin{proof}
	Recall that by Corollary \ref{cor-hyb} we have that $\pi_s \geq \pi_o - \lambda\cdot (1-p(S))\cdot E[C|\bar S]$, where $S$ is the event that the optimal agent reached the target and $ E[C|\bar S]$ is the expected cost of the optimal agent for paths in which it stopped before the target. To bound the payoff of the sophisticated agent we bound  $(1-p(S))\cdot E[C|\bar S]$. Assume without loss of generality that the optimal agent stopped traversing at $v_k$. 
In the fan there is only a single failing path $(s,v_2,\ldots,v_k)$. Thus, 
	\begin{align} \label{eq-fan}
		(1-p(S))\cdot E[C|\bar S] = (\prod_{i=1}^{k-1} (1-p_i)) \cdot (\sum_{i=1}^{k-1} c_i)
	\end{align}
	 We now use the fact that the optimal agent traverses the graph till $v_{k}$ to get an upper bound on $(1-p(S))\cdot E[C|\bar S]$. %In particular we claim that since the optimal agent 
	\begin{lemma}  \label{lem:fan-upper-expected}
		If the optimal agent reaches $v_k$ then $\sum_{i=1}^{k-1} c_i \leq \sum_{i=1}^{k-1} p_i \cdot R$.
	\end{lemma}
	\begin{proof}
		 Let $\pi_o(v_j,v_k)$ denote the payoff of an {optimal} agent traversing the graph from node $v_j$ to $v_k$. We prove by a backward induction that for any $j\geq 1$, $\pi_o(v_j,v_k) \leq \sum_{i=j}^{k-1} \left( p_i\cdot R -c_i \right)$.  For the base case observe that if the agent continues from $v_{k-1}$ to $v_k$ it is indeed the case that $\pi_o(v_{k-1},v_k) =  p_{k-1}\cdot R - c_{k-1}$. For the induction step we assume correctness for $v_{j+1}$ and prove it for an agent traversing the graph starting from $v_j$. Observe that, if the agent traverses the graph from $v_j$ to $v_k$ then: $\pi_o(v_j,v_k) = p_j\cdot R -c_j + (1-p_j) \cdot \pi_o(v_{j+1},v_k)$. By the induction hypothesis we have that $\pi_o(v_{j+1},v_k) \leq \sum_{i=j+1}^{k-1} \left( p_i \cdot R -c_i \right)$ putting this together we get that $\pi_o(v_j,v_k) \leq \sum_{i=j}^{k-1} \left( p_i \cdot R -c_i \right)$ as required. Thus, the expected payoff of the optimal agent reaching $v_k$ is at most $\sum_{i=1}^{k-1} \left( p_i \cdot R -c_i \right)$. 
Since this quantity has to be non-negative, we have $\sum_{i=1}^{k-1} c_i \leq \sum_{i=1}^{k-1} p_i \cdot R$.
	\end{proof}
%	\begin{corollary}

By applying Lemma \ref{lem:fan-upper-expected} on Equation \ref{eq-fan} we get that
%Thus, we have that if the optimal agent stops at $k$, then $\sum_{i=1}^{n-1} c_i \leq \sum_{i=1}^{n-1} p_i \cdot R$. Putting this together we get that
	\begin{align*}
	(1-p(S))\cdot E[C|\bar S] \leq
	(\prod_{i=1}^{k-1} (1-p_i)) \cdot (\sum_{i=1}^{k-1} R \cdot p_i) 
	\end{align*}
	In Claim \ref{clm:optimizing} in  
	%the supplementary material 
	the appendix we show that the maximum of this function is attained when for every $1\leq i<k$: $p_i = \frac{1}{k}$ implying
	\begin{align*}
	(1-p(S))\cdot E[C|\bar S] \leq R \cdot  (k-1)\cdot \frac{1}{k} \cdot (1-\frac{1}{k})^{k-1} 
%	\\
	&
	= R\cdot (1-\frac{1}{k})^{k}
	\end{align*}
		thus, the maximum is attained for $k=n$.
\end{proof}

By noticing that when $n$ approaches infinity $(1-\frac{1}{n})^{n}$ approaches $\frac{1}{e}$ we conclude that:
\begin{corollary} \label{cor-fan-bound}
	In every fan graph $\pi_s \geq \pi_o -\frac{1}{e} \cdot \lambda \cdot R$.
\end{corollary}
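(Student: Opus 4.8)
The plan is to deduce the corollary directly from Theorem \ref{thm-fan-bound} by replacing the $n$-dependent factor $(1-\frac{1}{n})^{n}$ with the uniform constant $\frac{1}{e}$. The theorem already gives, for a fan on $n$ internal nodes, $\pi_s \geq \pi_o - \lambda(1-\frac{1}{n})^{n} R$, so it suffices to establish the single numerical inequality $(1-\frac{1}{n})^{n} \leq \frac{1}{e}$ for every integer $n \geq 1$. Since $\lambda \geq 0$ and $R \geq 0$, multiplying this bound by $\lambda R$ and negating preserves the inequality direction, so $\pi_o - \lambda(1-\frac{1}{n})^{n} R \geq \pi_o - \frac{1}{e}\lambda R$, and chaining with the theorem yields the claim.

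The heart of the argument is therefore the elementary estimate $(1-\frac{1}{n})^{n} \leq \frac{1}{e}$. I would prove it from the standard inequality $1 + x \leq e^{x}$, valid for all real $x$ and justified by the convexity of the exponential (its graph lies above the tangent line at the origin). Setting $x = -\frac{1}{n}$ gives $1 - \frac{1}{n} \leq e^{-1/n}$; since both sides are positive, raising to the $n$-th power gives $(1-\frac{1}{n})^{n} \leq e^{-1} = \frac{1}{e}$. Equivalently, one can take logarithms and apply $\ln(1-x) \leq -x$ to obtain $n\ln(1-\frac{1}{n}) \leq -1$, which exponentiates to the same conclusion.

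I expect no genuine obstacle here, since the corollary is an immediate weakening of Theorem \ref{thm-fan-bound}. The only point that warrants a moment's care is the direction of the bound: one must confirm that $\frac{1}{e}$ dominates $(1-\frac{1}{n})^{n}$ for every \emph{finite} $n$, not merely in the limit. The inequality above secures exactly this, reflecting the fact that the sequence $(1-\frac{1}{n})^{n}$ increases to $\frac{1}{e}$ from below, so the limiting constant dominates every finite term and delivers a bound uniform in the graph's size.
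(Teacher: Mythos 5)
Your proposal is correct and follows essentially the same route as the paper: both deduce the corollary from Theorem \ref{thm-fan-bound} by replacing the factor $\left(1-\frac{1}{n}\right)^{n}$ with the constant $\frac{1}{e}$. In fact you are slightly more careful than the paper, which merely cites the limit $\left(1-\frac{1}{n}\right)^{n} \to \frac{1}{e}$, whereas your use of $1+x \leq e^{x}$ explicitly establishes the pointwise domination $\left(1-\frac{1}{n}\right)^{n} \leq \frac{1}{e}$ for every finite $n$, which is what the corollary actually requires.
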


%\begin{conjecture}
%	For any fan of size $n$ and any $\lambda$ there exists a fan graph maximizing the gap between $\pi_s$ and $\pi_o$ such that for every $i$: $p_i=\frac{1}{n}+\eps$ and $c_i=\frac{1}{n}+\eps$.
%\end{conjecture}
\begin{figure}[t]
	\centering
	\includegraphics[width=1\linewidth]{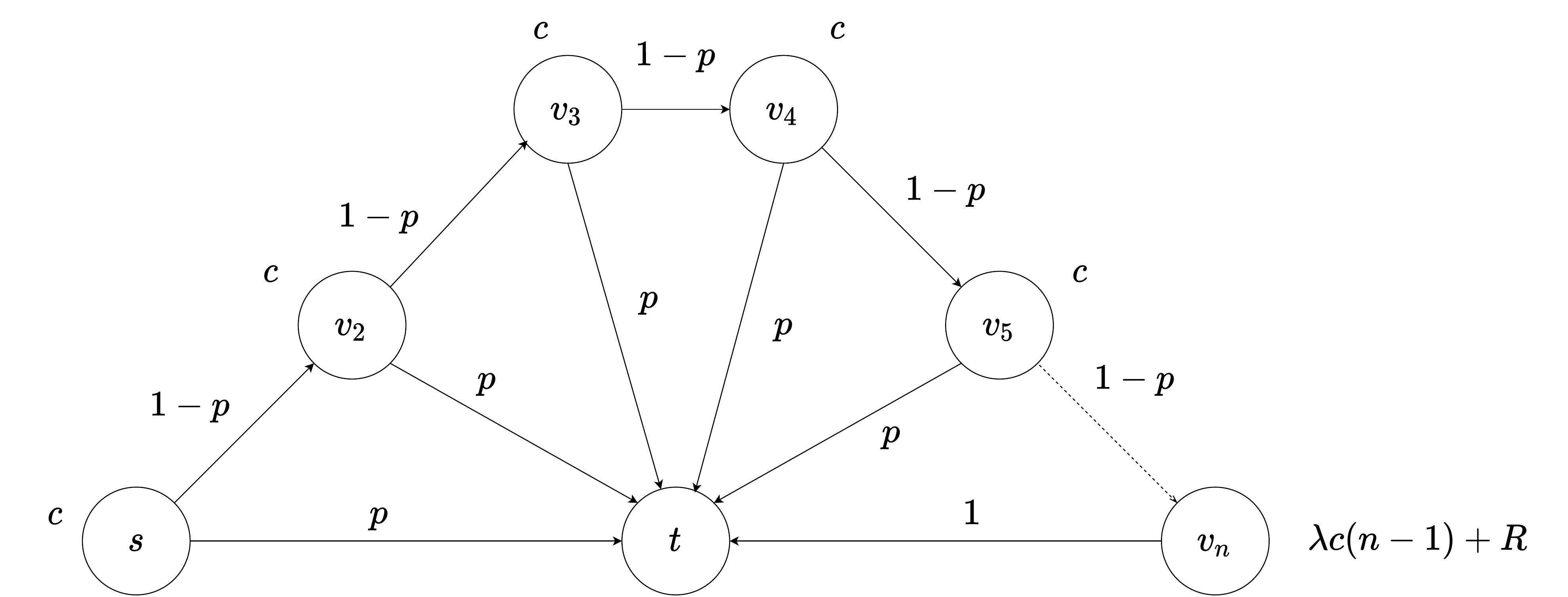}
	\caption{An instance of the Fan graph for Theorem \ref{thm-fan-bound}.}
	\label{fig:fan-tight}
\end{figure}
Next, we show this bound is, in some sense, tight in the limit:
\begin{theorem} \label{thm-fan-tight-bound}
	There exists a family of instances in which $\lambda$ is a function of $n$, such that as $n$ goes to infinity $\lim_{n \rightarrow \infty} \pi_o -\frac{1}{e} \cdot \lambda \cdot R = \pi_s$.
\end{theorem}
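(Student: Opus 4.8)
The plan is to exhibit a one-parameter family of fan graphs that simultaneously saturates every inequality used to derive Corollary \ref{cor-fan-bound}. That bound chains three facts: the hybrid-agent inequality $\pi_s \ge \pi_o - \lambda(1-p(S))E[C|\bar S]$ (Corollary \ref{cor-hyb}); the identity $(1-p(S))E[C|\bar S] = (\prod_{i=1}^{k-1}(1-p_i))(\sum_{i=1}^{k-1}c_i)$ together with the constraint $\sum_{i=1}^{k-1}c_i \le \sum_{i=1}^{k-1}p_i R$ coming from Lemma \ref{lem:fan-upper-expected}; and the optimization of Claim \ref{clm:optimizing}, whose maximizer is $p_i = 1/k$ with $k=n$. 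I would therefore take the fan with $n$ spine nodes, set $p_i = 1/n$ for $1 \le i \le n-1$ (matching that maximizer), arrange for the optimal agent to stop exactly at the last spine node $v_n$, and choose the costs so that the single failing path carries essentially all of the cost mass $R$ while the sophisticated agent is driven to continue at $v_n$ purely by its accumulated sunk cost.

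Concretely I would set $c_i = \frac{R}{n}(1-\delta_n)$ for $i<n$, where $\delta_n \to 0$ controls how close the optimal agent is to indifference, and $c_n = R + \lambda K_n$, where $K_n = \sum_{i=1}^{n-1}c_i$ is the sunk cost at $v_n$ (deterministic, since the fan has a unique path to each node). Then the optimal agent continues along the spine and stops at $v_n$ because $R - c_n < 0$, so $p(\bar S) = (1-1/n)^{n-1}$, $E[C|\bar S] = K_n$, and a short backward induction gives $\pi_o = R\,\delta_n\,(1-(1-1/n)^{n-1})$. The sophisticated agent instead sees $c_n = R + \lambda K_n$ as exactly its break-even point at $v_n$: its actual continuation value $R-c_n$ equals $-\lambda K_n$, so under tie-breaking toward continuing (or after an arbitrarily small downward perturbation of $c_n$) it continues at $v_n$, reaches $t$ with probability one, and hence $\pi_s = \pi_o - \lambda (1-1/n)^{n-1} K_n = \pi_o - \lambda (1-1/n)^n (1-\delta_n) R$, meeting the hybrid inequality with equality.

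The key step I must verify is that the sophisticated agent really continues at \emph{every} spine node, not merely at $s$ and $v_n$; otherwise the backward-induction values above are not the true ones. Since the sunk cost along the unique spine path is deterministic, I can compute the continuation value $f_i$ at $v_i$ in closed form under the ``continue everywhere'' hypothesis, and the test $f_i \ge -\lambda K_i$ reduces, after collecting terms, to the nonnegativity of a single function $h(m) = A(1-(1-1/n)^m) - \lambda c\,m$ of the distance $m = n-i$ to the end, where $c = \frac{R}{n}(1-\delta_n)$ and $A = R\delta_n + \lambda c(n-1)$. This $h$ is concave in $m$ with $h(0)=0$ and $h(n-1) = \pi_s$, and a concave function that is nonnegative at both endpoints of an interval is nonnegative throughout; so as soon as I guarantee $\pi_s \ge 0$ the ``continue everywhere'' policy is self-consistent and is therefore the sophisticated agent's policy. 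This collapses the entire construction to the single scalar constraint $\pi_s \ge 0$, i.e.\ $\lambda \le \frac{\delta_n(1-(1-1/n)^{n-1})}{(1-1/n)^n(1-\delta_n)}$.

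Finally I would pick $\delta_n \to 0$ (e.g.\ $\delta_n = n^{-1/2}$) and $\lambda = \lambda(n)$ at, or just below, this threshold, and read off the limit. Because $\pi_o - \pi_s = \lambda (1-1/n)^n (1-\delta_n) R$ holds exactly, we get $\pi_o - \frac{1}{e}\lambda R - \pi_s = \lambda R\big[(1-1/n)^n(1-\delta_n) - \frac{1}{e}\big] \to 0$, since $(1-1/n)^n \to 1/e$ and $\delta_n \to 0$; equivalently $\frac{\pi_o-\pi_s}{\lambda R} = (1-1/n)^n(1-\delta_n) \to \frac{1}{e}$, which is the sense in which Corollary \ref{cor-fan-bound} is asymptotically tight, and these are precisely the fans sketched in Figure \ref{fig:fan-tight}. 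I expect the main obstacle to be exactly the constraint $\pi_s \ge 0$, and this is also why $\lambda$ must be a function of $n$: pushing the failing-path cost up to its maximal value $\approx R/e$ forces the ``slack'' $\pi_o$ that the optimal agent accumulates down to $0$, which squeezes $\lambda$ to $0$ as well, so the tightness is necessarily of ratio type with both $\pi_o$ and $\pi_s$ vanishing in the limit; I would take care that the chosen $\delta_n$ keeps $\lambda_n$ strictly positive and all costs well-defined.
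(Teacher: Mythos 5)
Your proposal is correct and follows essentially the same route as the paper: the paper's construction is exactly your family with $\delta_n = 1/n$ and $R=1$ (uniform $p_i = 1/n$, spine costs $\frac{R}{n}\left(1-\frac{1}{n}\right)$, and last cost $R + \lambda c(n-1)$ set at the sophisticated agent's break-even point with ties broken toward continuing), and it likewise chooses $\lambda$ at the threshold that makes $\pi_s = 0$ exactly. Your concavity argument for $h(m)$, with $h(0)=0$ and $h(n-1)=\pi_s$, is the paper's Lemma~\ref{func:convex} (convexity of $f$ with equal endpoint values) in an equivalent parametrization, so in both proofs the verification that the agent continues at every spine node reduces to the single constraint $\pi_s \geq 0$.
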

\begin{proof}
We consider fan graphs in which the transition probabilities and costs on the path are identical. That is, for every $1\leq i < n$, $p_i=p = \frac{1}{n}$ and $c_i = c = \frac{1}{n} - \frac{1}{n^2}$. We also set $R=1$. The graph is illustrated in Figure \ref{fig:fan-tight}. 

It is easy to see that the optimal agent would traverse the graph till it reaches $v_n$ as the expected payoff of each step by itself is strictly positive. The optimal agent would stop at $v_n$ as the expected payoff of this last step is negative. The expected payoff of the optimal agent is
{ \begin{align*}
{\pi_o = \sum_{i=1}^{n-1} (1-p)^{i-1} (p-c)   = \frac{1 - (1-p)^{n-1}}{p}(p-c)}
\end{align*} }
%and would then stop. Observe that the expected payoff of the optimal agent is 
%
% In addition, observe that the expected payoff of an optimal agent that starts traversing the graph from node $v_k$ is $\frac{1 - (1-p)^{n-k}}{p}(p-c)$, since at any step the expected payoff is $p + (1-p)\pi_o(v_{k+1}) - c$. Therefore, the expected payoff of an optimal agent that starts traversing the graph from $v_1$ is $\pi_o = \frac{1 - (1-p)^{n-1}}{p}(p-c)$. With $p > c$ its easy to see that the optimal agent will reach $v_n$.
%	
Next, we let $\lambda = \frac{\pi_o}{(1-p)^{n-1}(n-1)c}$. In Lemma \ref{lem-lambda-small-1} in the appendix, we show that it is indeed the case that $\lambda\leq 1$. %(Claim \ref{lambda-small-1}).

%SO 19/2 - removed the whole computation
Observe that $\lim_{n \rightarrow \infty} (1-p)^{n-1}(n-1)c = \frac{1}{e}$.
%\begin{align*}
%&{\lim_{n \rightarrow \infty} (1-p)^{n-1}(n-1)c}
%\\
%&{= \lim_{n \rightarrow \infty} (1 - \frac{1}{n})^{n - 1}(n-1)(\frac{1}{n} - \frac{1}{n^2})}
%\\
%&{= \lim_{n \rightarrow \infty} (1 - \frac{1}{n})^{n} - \frac{1}{n}(1 - \frac{1}{n})^n = \frac{1}{e}}.
%\end{align*}
Thus, we have that as $n$ approaches infinity $\pi_o$ is approaching $ \frac{\lambda}{e}$. To complete the proof we show that for this value of 
%
% and show that for this value of $\lambda$: (1) the payoff of the sophisticated agent is $0$ and (2) $\pi_o$ approaches $\lambda / e$. We begin with the second statement:
%\begin{align*}
%{\textstyle\lim_{n \rightarrow \infty} \frac{\pi_o}{\lambda}} &{\textstyle= \lim_{n \rightarrow \infty} (1-p)^{n-1}(n-1)c}
%\\
%&{\textstyle= \lim_{n \rightarrow \infty} (1 - \frac{1}{n})^{n - 1}(n-1)(\frac{1}{n} - \frac{1}{n^2})}
%\\
%&{\textstyle= \lim_{n \rightarrow \infty} (1 - \frac{1}{n})^{n} - \frac{1}{n}(1 - \frac{1}{n})^n = \frac{1}{e}}
%\end{align*}
%Therefore, as $n$ goes to infinity, $\pi_o$ goes to $\lambda \cdot \frac{1}{e}$. 
%We prove that for the chosen value of 
$\lambda$, $\pi_s=0$. To this end we show that if the agent starts traversing the graph it will go all the way to $t$ and in this case its expected payoff will be $0$. Observe that the expected payoff of a sophisticated agent that continues from node $v_k$ till $t$ is:
	\begin{align*}
\pi_s(v_k) = \pi_o(v_k) - \lambda (1-p)^{n-k}(n-1)c
	\end{align*}
	The first term is the payoff of the sophisticated agent for getting to $t$ before $v_n$ which is just the same as the optimal agent and the second term is the cost that the sophisticated agent incurs for taking the last step from $v_n$ to $t$. Thus, the expected payoff of a sophisticated agent that starts at $s$ and only stops when reaching $t$ is %\aaai{$\pi_s(v_1) = \pi_o - \lambda (1-p)^{n-1}(n-1)c$}
%\full{	
	 \begin{align*}
	 \pi_s(v_1) = \pi_o - \lambda (1-p)^{n-1}(n-1)c
	 \end{align*}
%}
	 which is $0$ for our choice of $\lambda$.
	
	We now prove that a sophisticated agent that starts traversing the graph will go all the way to $t$. We give a proof by a backward induction. That is, we assume that if the agent arrives at $v_{k+1}$ from $v_k$ then it will go all the way to $t$. For the base case observe that if the agent arrives at $v_n$, then the payoff for abandoning is $-\lambda c(n-1)$. On the other hand, moving to $t$ incurs the same payoff: $R - (\lambda c(n-1) + R) = -\lambda c(n-1)$ which is the same as quitting. For ease of presentation we assume that the agent breaks ties in favor of continuing and hence the agent will choose to continue.\footnote{To avoid applying a tie-breaking rule we can add small perturbations to the costs to make sure that the agent will strictly prefer to continue.} For the induction step we assume correctness for $v_{k+1}$ and prove for an agent traversing the graph from $v_k$. By the induction hypothesis, since the sophisticated agent continues to traverse the graph from $v_{k+1}$ till $t$ then its expected payoff for continuing is $\pi_s(v_k)$. To complete the proof we show that the
	expected payoff is greater than or equal to the agent's sunk cost (i.e., $\pi_s(v_k) \geq -\lambda c(k-1)$):
	
	\begin{lemma} \label{lem-fan-tight}
		$\pi_s(v_k) \geq -\lambda c(k-1)$.
	\end{lemma}
	\begin{proof} 
		We need to show that:
		\begin{align*}
		\pi_o(v_k) -\lambda (1-p)^{n-k}(n-1)c \geq -\lambda c(k-1)
		\end{align*}
		By rearranging we get that:
		\begin{align*}
		\pi_o(v_k)  \geq \lambda c \left( (1-p)^{n-k}(n-1) -(k-1) \right) 
		\end{align*}
		Since $\lambda c > 0$, if $((1-p)^{n-k}(n-1) -(k-1)) \leq 0$ the lemma trivially holds. 
		Else, assume that $((1-p)^{n-k}(n-1) -(k-1)) > 0$, hence we can divide by this and get that
		\begin{align*}
		\lambda \leq \frac{\pi_o(v_k)}{c((1-p)^{n-k}(n-1) - (k-1))} 
		\end{align*}
		By substituting for our choice of $\lambda$ we get that:
		\begin{align*}
		\frac{\pi_o}{(1-p)^{n-1}(n-1)c} \leq \frac{\pi_o(v_k)}{c((1-p)^{n-k}(n-1) - (k-1))} 
		\end{align*}
		By rearranging we get that:
		\begin{align*}
		\frac{1 - (1-p)^{n-1}}{(1-p)^{n-1}(n-1)} \leq \frac{1 - (1-p)^{n-k}}{(1-p)^{n-k}(n-1) - (k-1)} 
		\end{align*}
		which implies that
		\begin{align*}
		\underbrace{(1-p)^{n-k}(n-1) - (k-1) + (1-p)^{n-1}(k-1)}_{f(k)} &\leq
		\\
		(1-p)^{n-1}(n-1)
		\end{align*}
		%Therefore we need to prove that:
		%\begin{align*}
		%(1-p)^{n-k}(n-1) - (k-1) + (1-p)^{n-1}(k-1) \leq (1-p)^{n-1}(n-1)	
		%\end{align*}
Finally, we apply Lemma \ref{func:convex} to show that the above inequality holds. This is done by proving that $f(k)$ is bounded from above by $(1-p)^{n-1}(n-1)$ for any integer $1 \leq k \leq n$. 
		%	The proof of this auxiliary claim can be found in Claim \ref{func:convex}.
		%	In claim \ref{func:convex} we show that the function $f(x) = (1-p)^{n-x}(n-1) - (x-1) + (1-p)^{n-1}(x-1)$ where $0 < p < 1$ is bounded above by $(1-p)^{n-1}(n-1)$ for each $1 \leq x \leq n$ which implies the claim.
	\end{proof}

	\begin{lemma} \label{func:convex} 
		The function $f(x) = (1-p)^{n-x}(n-1) - (x-1) + (1-p)^{n-1}(x-1)$ where $0 < p < 1$ is bounded above by $(1-p)^{n-1}(n-1)$ for each $1 \leq x \leq n$.
	\end{lemma}
	\begin{proof}
		In order to prove this claim we show that $f$ is convex in $[1, n]$ and that $f(1) = f(n) = (1-p)^{n-1}(n-1)$. Therefore, for each $x \in [1, n]$, $f(x) \leq (1-p)^{n-1}(n-1)$.
		\newline
		\newline
		Observe that indeed $f(1) = f(n) = (1-p)^{n-1}(n-1)$. In addition:
		\begin{align*}
		f^{'}(x) &= -\ln(1-p)(n-1)(1-p)^{n-x} + (1-p)^{n-1} - 1
		\\
		f^{''}(x) &= \ln^{2}(1-p)(n-1)(1-p)^{n-x}
		\end{align*}
		Thus, for $1 < x < n$ and $0 < p < 1$ we have that $f^{''}(x) \geq 0$. Therefore, $f(x) \leq (1-p)^{n-1}(n-1)$ for each $x \in [1, n]$ as required.
	\end{proof}

%	In Lemma \ref{lem-fan-tight} in the appendix we prove that this 
% (i.e., $\pi_s(v_k) \geq -\lambda c(k-1)$) and hence the agent will continue traversing the graph, which completes the proof.
	\end{proof}

\section{Conclusion and Further Directions}

Sunk cost bias is a key behavioral bias that people exhibit, and 
it interacts in complex ways with uncertainty about the future.
We have proposed a model that allows us to study this interaction,
by analyzing the loss in performance of
agents that experience sunk cost bias as they
perform a planning problem with stochastic transitions between states.

There are a number of further questions suggested by this work.
A central open question is whether fan graphs represent the worst case
for sophisticated agents with sunk cost bias.
It would also be interesting to explore how the stochastic environment
for sunk cost bias can be adapted to incorporate agents with other
kinds of biases as well.

%\begin{contributions} % will be removed in pdf for initial submission,
                      % so you can already fill it to test with the
                      % ‘accepted’ class option
    %Briefly list author contributions.
    %This is a nice way of making clear who did what and to give proper credit.

    %H.~Q.~Bovik conceived the idea and wrote the paper.
    %Coauthor One created the code.
    %Coauthor Two created the figures.
%\end{contributions}

%\begin{acknowledgements} 
%	Work supported in part by BSF grant 2018206, Vannevar Bush Faculty Fellowship, MURI grant W911NF-19-0217, AFOSR grant FA9550-19-1-0183 and ISF grant 2167/19.
%\end{acknowledgements}

\newpage
\bibliography{sunk}

\newpage
\onecolumn
\appendix
\section{Missing Proofs from Section 4} \label{app-aux}
\begin{claim} \label{clm:optimizing} 
	The function $\prod_{i=1}^{n-1} (1-p_i) \cdot (\sum_{i=1}^{n-1} p_i)$ attains its maximal value for $0\leq p_i \leq 1$, when for every $i$, $p_i = 1/n$.
\end{claim}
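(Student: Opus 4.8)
The plan is to treat this as a box-constrained maximization and exploit the symmetry of the objective. Writing $m = n-1$ for the number of variables and $f(p) = \prod_{i=1}^{m}(1-p_i)\cdot \sum_{i=1}^{m} p_i$, I first observe that $f$ is continuous on the compact box $[0,1]^m$, so a maximizer exists. Since the value claimed for the optimum, $(1-1/n)^n$, is strictly positive, any point at which some $p_i=1$ (where $f=0$) can immediately be discarded, and I may restrict attention to the region where each $p_i<1$.

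The key idea is a two-stage reduction. First I would fix the value of the sum $s = \sum_{i=1}^{m} p_i$ and maximize the remaining factor $\prod_{i=1}^{m}(1-p_i)$ on its own. Because $x \mapsto \ln(1-x)$ is strictly concave on $[0,1)$, Jensen's inequality gives $\sum_{i}\ln(1-p_i) \le m\,\ln(1-s/m)$, with equality exactly when all the $p_i$ are equal to $s/m$. Hence every feasible point satisfies $f(p)\le s\,(1-s/m)^m$, and this upper bound is attained by the symmetric point $p_i = s/m$. This collapses the problem to the single-variable maximization of $\phi(s) = s\,(1-s/m)^m$ over $s \in [0,m]$.

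The second stage is to optimize $\phi$. A direct differentiation gives $\phi'(s) = (1-s/m)^{m-1}\,(1 - s/m - s)$, so the unique interior stationary point is $s^* = m/(m+1)$; since $\phi(0)=\phi(m)=0$ and $\phi$ is positive in the interior, $s^*$ is the maximizer. Substituting back, the optimal coordinates are $p_i = s^*/m = 1/(m+1) = 1/n$, and the maximal value is $\phi(s^*) = \left(m/(m+1)\right)^{m+1} = (1-1/n)^n$, which matches the value used in the proof of Theorem~\ref{thm-fan-bound}.

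The main obstacle, though a mild one, is to justify the symmetrization step rigorously, i.e.\ that for a fixed sum the product $\prod_i(1-p_i)$ is genuinely maximized at the symmetric point; this is precisely the strict concavity of $\ln(1-x)$ combined with Jensen's inequality, whose equality condition pins down the maximizer uniquely. An alternative route that avoids the two-stage reduction is to solve the first-order conditions directly: setting each $\partial f/\partial p_j = 0$ yields $1-p_j = \sum_i p_i$ for all $j$, forcing all $p_j$ equal and hence $p_j = 1/n$, while the boundary faces are dispatched by noting that $p_i=1$ gives $f=0$ and that setting $p_i=0$ reduces to the same problem in one fewer variable, whose optimum $(1-1/(n-1))^{n-1}$ is strictly smaller than $(1-1/n)^n$ since $(1-1/k)^k$ is increasing in $k$.
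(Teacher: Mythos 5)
Your proof is correct, and your main argument takes a genuinely different route from the paper's. The paper substitutes $q_i = 1-p_i$, computes the partial derivatives of $\prod_i q_i \cdot (n-1-\sum_i q_i)$, and solves the first-order conditions to arrive at the symmetric point $p_i = 1/n$ --- essentially the ``alternative route'' you sketch at the end, except that the paper stops at stationarity: it does not verify that the critical point is a global maximum, nor does it address existence of a maximizer or the boundary of the box. Your two-stage symmetrization fills exactly these gaps. Compactness gives existence; the faces $p_i=1$ are discarded because the function vanishes there; Jensen's inequality on the strictly concave $\ln(1-x)$ shows that for fixed sum $s$ the product $\prod_i(1-p_i)$ is maximized precisely at the symmetric point, collapsing the problem to the one-variable maximization of $\phi(s)=s(1-s/m)^m$; and elementary calculus then gives $s^*=m/(m+1)$, hence $p_i = 1/n$ and the closed-form optimum $\left(1-\tfrac{1}{n}\right)^n$ that Theorem~\ref{thm-fan-bound} actually uses. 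A further small advantage of the Jensen route is that it handles the faces $p_i=0$ uniformly (the bound $f(p)\le\phi(s)\le\phi(s^*)$ holds there too, and the equality condition forces every $p_i=1/n>0$), so the separate inductive step needed in your first-order-conditions variant is unnecessary. In short, both your routes reach the paper's answer, but your primary argument proves global optimality and uniqueness where the paper's computation only exhibits a stationary point.
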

\begin{proof}
	We simply take partial derivatives and compare them to $0$. To this end, it will be more convenient to use $q_i = 1-p_i$ and take partial derivatives of the function 
	$$f(q_1,\ldots,q_n) = \prod_{i=1}^{n-1} q_i \cdot (n-1- \sum_{i=1}^{n-1} q_i).$$
	Observe that:
	\begin{align*}
	\frac{\partial f} {\partial q_i} 
	&= \prod_{j \neq i} q_j (n-1-\sum_{j\neq i} q_j) -2\cdot \prod_j q_j
	\end{align*}
	By comparing it to $0$ and some rearranging we get that:
	\begin{align*}
	& 2\cdot \prod_j q_j = \prod_{j \neq i} q_j (n-1-\sum_{j\neq i} q_j) 
	\\
	&\implies 2q_i = n-1-\sum_{j\neq i} q_j 
	\\
	&\implies q_i = n-1-\sum_{j} q_j
	\end{align*}
	Thus,we have that for every $i$, $q_i$ has the same value of $q_i = n-1-\sum_{j} q_j$ and to compute the value of $q_i$ we can solve: $q = n-1 -(n-1) q$ which implies that $q = \frac{n-1}{n}$. Thus, we have that in our original maximization problem, for every $i$, $p_i = 1/n$. 
\end{proof}

\begin{lemma} \label{lem-lambda-small-1}
	For any $n \geq 3$ the value of $\lambda$ in Theorem \ref{thm-fan-tight-bound} is smaller than 1. 
	%For theorem \ref{thm-fan-tight-bound}, $\lambda < 1$.
\end{lemma}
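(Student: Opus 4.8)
The plan is to reduce the claim $\lambda < 1$ to a clean numerical inequality and then dispatch it with the standard estimate $(1-1/n)^{n-1} > 1/e$, treating the smallest admissible value $n=3$ separately.

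First I would substitute the specific parameters of Theorem~\ref{thm-fan-tight-bound} into the definition of $\lambda$. With $p = 1/n$, $c = 1/n - 1/n^2 = (n-1)/n^2$, and $R=1$, one has $p - c = 1/n^2$, and hence $\pi_o = \frac{1-(1-p)^{n-1}}{p}(p-c) = \frac{1-(1-1/n)^{n-1}}{n}$. Plugging this together with $(1-p)^{n-1}(n-1)c = \frac{(n-1)^2}{n^2}(1-1/n)^{n-1}$ into $\lambda = \pi_o / \big((1-p)^{n-1}(n-1)c\big)$ gives the closed form
$$\lambda = \frac{n\big(1-(1-1/n)^{n-1}\big)}{(n-1)^2(1-1/n)^{n-1}}.$$
Writing $X = (1-1/n)^{n-1}$ and clearing denominators, the inequality $\lambda < 1$ is equivalent to $n < X\big((n-1)^2+n\big) = X(n^2-n+1)$, so it suffices to prove
$$\left(1-\tfrac{1}{n}\right)^{n-1}(n^2-n+1) > n \qquad \text{for } n \ge 3.$$

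For the main range I would invoke the classical monotonicity fact that $(1+1/m)^m$ increases to $e$; setting $m=n-1$ this yields $(1-1/n)^{n-1} = \big(1+1/(n-1)\big)^{-(n-1)} > 1/e$ for every $n\ge 2$. Hence $(1-1/n)^{n-1}(n^2-n+1) > \tfrac{1}{e}(n^2-n+1)$, and it remains to check $\tfrac{1}{e}(n^2-n+1) > n$, i.e. $n^2-(1+e)n+1 > 0$. The larger root of this quadratic is $\tfrac{1}{2}\big((1+e)+\sqrt{(1+e)^2-4}\big) \approx 3.43$, so the inequality holds for every integer $n \ge 4$.

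The one gap is $n=3$, where the uniform bound $1/e$ is too weak (the quadratic is negative there). I would simply verify this case by direct computation: $(1-1/3)^{2}(3^2-3+1) = \tfrac{4}{9}\cdot 7 = \tfrac{28}{9} > 3$, so $\lambda<1$ at $n=3$ as well, and combining the two ranges proves the lemma. The only real subtlety is recognizing that the clean $1/e$ estimate fails to close the $n=3$ case (and indeed the target inequality is false at $n=2$, which is precisely why the hypothesis requires $n\ge 3$), so that boundary value must be handled by hand rather than through the general argument.
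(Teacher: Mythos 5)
Your proof is correct, and it takes a genuinely different route from the paper's for the main range of $n$. Both arguments start the same way (substituting $p=1/n$, $c=(n-1)/n^2$, $R=1$ into $\lambda$), and in fact your target inequality $\left(1-\tfrac{1}{n}\right)^{n-1}(n^2-n+1) > n$ is algebraically the same statement as the paper's inequality $\left(1-\tfrac{1}{n}\right)^{n-1} + n\left(1-\tfrac{1}{n}\right)^{n+1} \geq 1$ (multiply the latter through by $n$ and factor out $\left(1-\tfrac{1}{n}\right)^{n-1}$); both proofs also check $n=3$ explicitly, via $28/9>3$ in your case and $f(3)=28/27>1$ in the paper's. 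The divergence is in how $n\geq 4$ is handled: the paper defines $f(n)=\left(1-\tfrac{1}{n}\right)^{n-1}+n\left(1-\tfrac{1}{n}\right)^{n+1}$, computes $f'(n)$, and asserts that ``by using calculus one can show'' $f'(n)>0$ for $n>2$, i.e.\ it relies on a monotonicity claim whose verification is left to the reader and involves an inequality mixing $\ln\left(\tfrac{n-1}{n}\right)$ with polynomial terms. You instead apply the uniform bound $\left(1-\tfrac{1}{n}\right)^{n-1}=\left(1+\tfrac{1}{n-1}\right)^{-(n-1)}>\tfrac{1}{e}$ and reduce everything to the quadratic $n^2-(1+e)n+1>0$, whose larger root is about $3.43$. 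Your version is fully self-contained and elementary --- no derivative estimates are needed --- at the cost of the $1/e$ bound being too lossy at $n=3$, which you correctly recognize and patch by direct computation; you also correctly observe that the statement genuinely fails at $n=2$, which explains the hypothesis $n\geq 3$.
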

\begin{proof}
	Recall that $\lambda = \frac{\pi_o}{(1-p)^{n-1}(n-1)c}$	where $\pi_o = \frac{1 - (1-p)^{n-1}}{p}(p-c)$, $p = \frac{1}{n}$ and $c = \frac{1}{n} - \frac{1}{n^2}$. By plugging in the values of $p, c$ and $\pi_o$ we get that
	\begin{align*}
	\lambda =  \frac{(1 - (1 - \frac{1}{n})^{n-1})}{n(1-\frac{1}{n})^{n+1}}
	\end{align*}
	%	To show that $\lambda \leq 1$ it is sufficient to prove that
	%	\small
	%	\begin{align*}
	% \frac{1}{n} - \frac{1}{n}(1 - \frac{1}{n})^{n-1} - (1 - \frac{1}{n})^{n+1} < 0
	%	\end{align*}
	%	\normalsize
	%	Let $f(x) = \frac{1}{x} - \frac{1}{x}(1 - \frac{1}{x})^{x-1} - (1 - \frac{1}{x})^{x+1}$. We prove that $f(x) < 0$ for $x \geq 4$ which completes the proof. First, observe that $f(4) = -95/1024 < 0$. We now prove that $f'(x) < 0$ for $x \geq 4$ and therefore $f(x) < f(4) < 0$ for any $x \geq 4$. Note that 
	%	\small
	%	\begin{align*}
	%	f(x) &= \frac{1}{x} - \frac{1}{x}(1 - \frac{1}{x})^{x-1} - (1 - \frac{1}{x})^{x+1} \\
	%	&= \frac{1}{x} - (1 - \frac{1}{x})^{x - 1}\Big[\frac{1}{x} + (1 - \frac{1}{x})^{2}\Big] \\
	%	&= \frac{1}{x} - (1 - \frac{1}{x})^{x - 1}\Big[\frac{1}{x^2} - \frac{1}{x} + 1\Big]
	%	\end{align*}
	%	\normalsize
	%	
	%	
	%	\normalsize
	To show that $\lambda\leq 1$ it suffices to show that:
	\begin{align*}
	(1 - \frac{1}{n})^{n-1} + n(1-\frac{1}{n})^{n+1} \geq 1
	\end{align*}
	
	%	It is sufficient to prove that
	%	\small
	%	\begin{align*}
	%		& \frac{1}{n}(1 - (1 - \frac{1}{n})^{n-1}) < (1-\frac{1}{n})^{n+1} \iff \\
	%		& \frac{1}{n} - \frac{1}{n}(1 - \frac{1}{n})^{n-1} - (1 - \frac{1}{n})^{n+1} < 0
	%	\end{align*}
	%	\normalsize
	Let $f(n) = (1 - \frac{1}{n})^{n-1} + n(1-\frac{1}{n})^{n+1} $. Observe that $f(3)=28/27>1$. Thus, showing that $f(n)$ is increasing will complete the proof.
	%	We denote $f(x) = \frac{1}{x} - \frac{1}{x}(1 - \frac{1}{x})^{x-1} - (1 - \frac{1}{x})^{x+1}$ and prove that $f(x) < 0$ for $x \geq 4$ which completes the proof. First, observe that $f(4) = -95/1024 < 0$. We now prove that $f'(x) < 0$ for $x \geq 4$ and therefore $f(x) < f(4) < 0$ for any $x \geq 4$. 
	%	Note that 
	%	\small
	%	\begin{align*}
	%	f(x) &= \frac{1}{x} - \frac{1}{x}(1 - \frac{1}{x})^{x-1} - (1 - \frac{1}{x})^{x+1} \\
	%	&= \frac{1}{x} - (1 - \frac{1}{x})^{x - 1}\Big[\frac{1}{x} + (1 - \frac{1}{x})^{2}\Big] \\
	%	&= \frac{1}{x} - (1 - \frac{1}{x})^{x - 1}\Big[\frac{1}{x^2} - \frac{1}{x} + 1\Big]
	%	\end{align*}
	%	\normalsize
	Note that 
	\begin{align*}
	f'(n) &= \frac{\left(\frac{n-1}{n}\right)^n \left(\left(n^2-n+1\right) \ln \left(\frac{n-1}{n}\right)+2 n-1\right)}{n-1}
	\end{align*}
	and by using calculus one can show that it is indeed the case that $f'(n)>0$ for any $n>2$ which completes the proof. 
\end{proof}

\section{Three Node Instances} \label{app-3-nodes}

	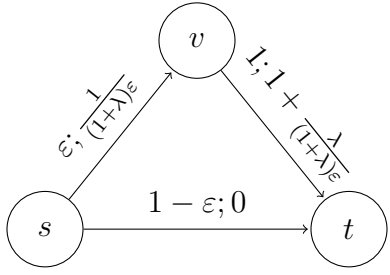
\begin{figure}[]
	\centering
	\begin{tikzpicture}[->,shorten >=1pt,auto,node distance=2cm, thin]
	\node(0) [circle, draw, minimum size=1cm] at (0,0) {$s$};
	\node (1) [circle, draw, minimum size=1cm] at (2,2.5) {$v$};
	\node (5) [circle, draw, minimum size=1cm] at (4,0) {$t$};
	
	\path[every node/.style={sloped,anchor=south,auto=false}]
	(0) edge node {$\eps; \frac{1}{(1+\lambda)\eps}$} (1)	
	(0) edge node {$1-\eps; 0$} (5)
	(1) edge node {$1; 1+ \frac{\lambda}{(1+\lambda)\eps}$} (5)
	;
	\end{tikzpicture}
	\caption{On each edge edge the left expression is the probability of taking the edge and the right number is the cost if the edge is taken. For $R=1$, we have that $\pi_s = 0$ and $\pi_o = \frac{\lambda}{1+\lambda} \cdot R -\eps$.  }
	\label{fig-alt-model}
\end{figure}

\begin{claim} \label{clm:alternative}
	In an alternative model in which costs are positioned on the the edges. For any $\eps$ there exists a $3$-node graph in which $\pi_s = 0$ and $\pi_o = \frac{\lambda}{1+\lambda} \cdot R -\eps$.
\end{claim}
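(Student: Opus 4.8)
The plan is to verify the two claimed values directly by unrolling the three-node decision process, treating the optimal agent and the sophisticated agent separately, with particular care at node $v$, where the whole construction hinges on an exact tie.

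First I would analyze the optimal agent by backward induction. At $v$ the only available action is the edge $v\to t$ of cost $1 + \frac{\lambda}{(1+\lambda)\eps}$ against the reward $R=1$, so continuing nets $1 - (1 + \frac{\lambda}{(1+\lambda)\eps}) = -\frac{\lambda}{(1+\lambda)\eps} < 0$; hence the optimal agent stops and $\pi_o(v) = 0$. At $s$, continuing yields $\eps\bigl(\pi_o(v) - \frac{1}{(1+\lambda)\eps}\bigr) + (1-\eps)(R - 0)$, which simplifies to $\frac{\lambda}{1+\lambda} - \eps$. For $\eps$ taken small enough that this is nonnegative (so the max in the optimal recursion is not clamped to $0$), the optimal agent continues, giving $\pi_o = \frac{\lambda}{1+\lambda} R - \eps$ as claimed.

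Second I would analyze the sophisticated agent, where the subtle step is node $v$. Arriving at $v$ via the $s\to v$ edge, the agent carries a sunk cost equal to that edge's cost, $K = \frac{1}{(1+\lambda)\eps}$; the actual payoff of continuing is $-\frac{\lambda}{(1+\lambda)\eps}$, and the sunk-cost threshold is $-\lambda K = -\frac{\lambda}{(1+\lambda)\eps}$, so the two coincide exactly. Invoking the stated decision rule (continue when the payoff is \emph{at least} $-\lambda K$, i.e.\ ties broken in favor of continuing), the sophisticated agent continues at $v$. This is the crux: the incoming edge cost is rigged precisely so the threshold is met with equality. Propagating this forward, at $s$ (where the sunk cost is $0$) the actual expected payoff of continuing, now accounting for the future self continuing at $v$, is $\eps\bigl(-\frac{\lambda}{(1+\lambda)\eps} - \frac{1}{(1+\lambda)\eps}\bigr) + (1-\eps)\cdot 1 = -\eps < 0$, so the sophisticated agent abstains at $s$ and $\pi_s = 0$.

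The main obstacle is not the algebra but the anticipation logic together with the exact tie at $v$: the construction works because the sophisticated agent foresees that reaching $v$ will trap its biased future self into the guaranteed-loss final edge, turning an otherwise profitable start into a losing proposition. I expect I would also want to note robustness, namely that the boundary equality at $v$ can be made strict by an arbitrarily small perturbation of the costs without affecting the limiting values. Finally I would observe that, combined with Proposition~\ref{prop:not-tight-bound-soph}, this instance shows the bound $\pi_s \geq \pi_o - \frac{\lambda}{1+\lambda}\cdot R$ is tight (as $\eps \to 0$) in the edge-cost model for every $\lambda \geq 0$.
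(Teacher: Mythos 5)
Your proof is correct and follows essentially the same route as the paper: the same three-node instance, with the optimal agent stopping at $v$ and the sophisticated agent's exact tie at $v$ (broken toward continuing) driving the result. If anything, your bookkeeping is more careful than the paper's: the paper's displayed computation omits the $\eps\cdot 1$ portion of the $v\to t$ edge cost and asserts the sophisticated agent's traversal payoff is exactly $0$, whereas you correctly obtain $-\eps<0$ and conclude the agent abstains at $s$ --- either way $\pi_s=0$, so the claim stands.
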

\begin{proof}
	Consider the $3$-node graph depicted in Figure \ref{fig-alt-model}. In this graph, it is clear that the optimal agent will not continue from $v$ to $t$ but the sophisticated agent will. Thus, the expected payoff of the optimal agent is:
	\begin{align*}
	\pi_o = (1-\eps)\cdot 1 -\eps \cdot \frac{1}{(1+\lambda)\eps} =  \frac{\lambda}{1+\lambda} - \eps
	\end{align*}
	If a $\lambda$ biased sophisticated agent will choose to traverse the graph he will always reach the target. Thus, its expected payoff will be:
	\begin{align*}
	1-\eps(\frac{1}{(1+\lambda)\eps} + \frac{\lambda}{(1+\lambda)\eps}) = 0 
	\end{align*}
	Thus, the payoff of the sophisticated agent is $0$.
	
\end{proof}

	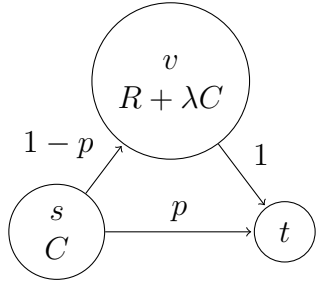
\begin{figure}[] \label{fig-3-tight}
	\centering
	\begin{tikzpicture}[->,shorten >=1pt,auto,node distance=2cm, thin]
	\node (s) [circle, draw, inner sep=0pt,minimum size=1pt] at (0,0) {\begin{tabular}{c} $s$
		\\ $C$ \end{tabular}};
	\node (v) [circle, draw,inner sep=0pt,minimum size=1pt] at (1.5,2) {\begin{tabular}{c} $v$
		\\ $R+\lambda C$ \end{tabular}};
	\node (t) [circle, draw, inner sep=0pt,minimum size=0.8cm] at (3,0) {$t$};
	
	\path
	(s) edge node {$1-p$} (v)
	(s) edge node {$p$} (t)
	(v) edge node {$1$} (t)
	;
	\end{tikzpicture}
	\caption{$3$-node graph illustration for Claim \ref{clm-3-nodes}.} \label{fig:tne}
\end{figure}

\begin{claim} \label{clm-3-nodes}
	For any 3-node graph and any $\lambda\geq 0$, $\pi_s \geq \pi_o - \frac{2+\lambda -2\sqrt{1+\lambda}}{\lambda} \cdot R$ and this is tight.
\end{claim}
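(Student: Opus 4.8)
The plan is to reduce an arbitrary 3-node instance to the canonical form of Figure~\ref{fig:tne} and then solve a one-variable optimization. First I would observe that in any 3-node DAG with start $s$, intermediate node $v$, and target $t$, the only transitions consistent with the topological order are $s\to t$ with some probability $p$, $s\to v$ with probability $1-p$, and $v\to t$ with probability $1$. Writing $C=c(s)$ and $D=c(v)$, the optimal and sophisticated agents can only ever disagree at $v$: the node $t$ is terminal, and at $s$ (where the sunk cost is $0$) each agent simply compares its own continuation value against the same threshold $0$.

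Next I would run a short case analysis on $D$. If $D\le R$, both agents continue at $v$, since the sophisticated agent's continuation value $R-D\ge 0$ already exceeds its perceived stopping cost $-\lambda K$; hence they agree everywhere and $\pi_s=\pi_o$. If $D>R+\lambda C$, both agents stop at $v$, so again they agree and $\pi_s=\pi_o$. The only interesting regime is $R<D\le R+\lambda C$, where the optimal agent stops at $v$ (value $0$) but the sophisticated agent continues (value $R-D<0$, which still exceeds $-\lambda C$). In this regime I would compute $\pi_o=\max\{pR-C,0\}$ and, when the sophisticated agent traverses from $s$, $\pi_s=pR-C+(1-p)(R-D)$, so the gap is $\pi_o-\pi_s=(1-p)(D-R)$. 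Since this is increasing in $D$, the worst case takes $D=R+\lambda C$ (precisely the cost in Figure~\ref{fig:tne}), giving a gap of $(1-p)\lambda C$.

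I would then isolate the optimization. The sophisticated agent traverses from $s$ exactly when $pR-C-(1-p)\lambda C\ge 0$, i.e. $C\le \frac{pR}{1+(1-p)\lambda}$; as the gap is increasing in $C$ I set $C$ to this bound (and note that pushing $C$ higher only forces $\pi_s=0$ with the smaller gap $pR-C$, so nothing is lost). The problem collapses to maximizing
\[
\frac{(1-p)\lambda p}{1+(1-p)\lambda}\cdot R
\]
over $p\in[0,1]$. Substituting $u=1-p$ and differentiating yields the quadratic $\lambda u^2+2u-1=0$, whose admissible root is $u^\star=\frac{\sqrt{1+\lambda}-1}{\lambda}=\frac{1}{\sqrt{1+\lambda}+1}\in[0,1]$. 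Using $1+u^\star\lambda=\sqrt{1+\lambda}$ and simplifying with $a=\sqrt{1+\lambda}$ reduces the objective to $\frac{(a-1)^2}{\lambda}\cdot R=\frac{2+\lambda-2\sqrt{1+\lambda}}{\lambda}\cdot R$, the claimed bound. Tightness follows because the instance of Figure~\ref{fig:tne} with $p=1-u^\star$, $C=\frac{pR}{1+(1-p)\lambda}$, and $c(v)=R+\lambda C$ attains equality, breaking ties in favor of continuing (or after an arbitrarily small perturbation of the costs).

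The main obstacle I anticipate is the bookkeeping of the case analysis rather than the calculus: I must make sure every traverse/stop combination and every degenerate topology (e.g.\ $\pi_o=0$, where $\pi_s\ge 0$ makes the bound trivial) is accounted for, so that the inequality is genuinely a worst case over all 3-node instances. Once the objective is correctly isolated, the optimization and the simplification to $\frac{2+\lambda-2\sqrt{1+\lambda}}{\lambda}$ are routine.
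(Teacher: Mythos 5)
Your proposal is correct and takes essentially the same approach as the paper's proof: both identify that the worst case is the instance of Figure~\ref{fig:tne} with $c(v)=R+\lambda C$ and the sophisticated agent exactly at its participation threshold, and both then solve the same one-variable optimization to obtain $\frac{2+\lambda-2\sqrt{1+\lambda}}{\lambda}\cdot R$. The only difference is bookkeeping: you eliminate $C$ via $C=\frac{pR}{1+(1-p)\lambda}$ and optimize over $p$ (handling the ``sophisticated refuses to start'' case by monotonicity in $C$), whereas the paper substitutes $p=\frac{C(1+\lambda)}{R+\lambda C}$, optimizes over $C$, and treats that case as a separate scenario reducing to the same maximization.
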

\begin{proof}
	Consider the graph in figure \ref{fig:tne}. {First, one can observe that the only two possible scenarios in which the optimal and the sophisticated agent will have different payoffs are:
	\begin{itemize}
		\item The optimal agent traverses the graph for a single step and the sophisticated agent continues.
		\item The optimal agent traverses the graph for a single step and the sophisticated agent is unwilling to start.
	\end{itemize}
	These are the only scenarios we should consider as if the optimal agent does not traverse the graph or continues at $v$ then the sophisticated agent will do the same.}

%	We will see that the payoff difference is maximized in case that the sophisticated agent does not traverse the graph. 
	{
	We begin by considering the scenario in which the optimal agent traverses the graph for a single step and the sophisticated agent continues. Observe that $\pi_o = pR - C$. Notice that if the sophisticated agent would start traversing the graph it would continue at $v$. Thus, its expected payoff for traversing the graph is $pR - C - (1-p)\lambda C \le 0$. By rearranging we get that $	p \le \frac{C(1+\lambda)}{R + \lambda C}$.
}
%	\begin{align*}
%	pR - C - (1-p)\lambda C &\le 0 \\
%	pR + \lambda p C &\le C + \lambda C \\
%	p &\le \frac{C(1+\lambda)}{R + \lambda C}
%	\end{align*}
	Thus, to maximize the expected payoff of the optimal agent, we set $p =
	\frac{C(1+\lambda)}{R+\lambda C}$ and get that:
	\begin{align*}
	\pi_o &= \frac{C(1+\lambda)}{R+\lambda C} \cdot R - C
	\end{align*}
	To maximize $\pi_o$ we take a derivative with respect to $C$ and compare it to $0$:
	\begin{align*}
		\frac{\partial\pi_o}{\partial C} = \frac{R(1+ \lambda)(R + \lambda C) - \lambda CR(1 + \lambda)}{(R + \lambda C)^2} - 1 = \frac{R^2(1+ \lambda)}{(R + \lambda C)^2} - 1
	\end{align*}
	\begin{align*}
		\frac{R^2(1+ \lambda)}{(R + \lambda C)^2} - 1 = 0 \implies R^2(1+\lambda) = (R+\lambda C)^2
	\end{align*}
	\begin{align*}
		C = \frac{R}{\lambda}\p{\sqrt{1+\lambda}-1}
	\end{align*}
	which gives
	\begin{align*}
	p =
	\frac{R\p{\sqrt{1+\lambda}-1}(1+\lambda)}{\lambda(R+R(\sqrt{1+\lambda}-1))}
	&= \frac{\p{\sqrt{1+\lambda}-1}\sqrt{1+\lambda}}{\lambda} 
	\\
	&= \frac{1+\lambda-\sqrt{1+\lambda}}{\lambda}
	\end{align*}
	Therefore:
	\begin{align*}
		\pi_o &= \frac{1+\lambda-\sqrt{1+\lambda}}{\lambda} \cdot R - \frac{R}{\lambda}(\sqrt{1 + \lambda} - 1)
		\\
		&= \frac{(2 + \lambda - 2\sqrt{1 + \lambda})}{\lambda} \cdot R
	\end{align*}
	%For $0 < \lambda \le 1$, $\frac{1}{2} < p \le .586$.
	For $0 < \lambda \leq 1$ we get that $0 < \pi_o - \pi_s \leq 0.172R$. 
	
	%This was removed to the main text
	%Note that for any $\lambda\geq 0$, $\frac{\lambda}{\lambda + 1} \geq \frac{2 + \lambda - 2\sqrt{1 + \lambda}}{\lambda}$ which implies that the bound we computed in Proposition \ref{prop:not-tight-bound-soph} is not tight.

	Finally, we consider the scenario in which the sophisticated agent traverse the graph and continues at $v$ while the optimal agent stops traversing the graph at $v$. { We show that optimizing the payoff difference for this scenario get us to the same optimization problem as we just solved.} Denote by $c(v)$ the cost at $v$. Since the sophisticated agent continues we have that $R-c(v) \geq -\lambda C \implies c(v) \leq R+\lambda C$. Also,  since the expected payoff of the sophisticated agent is positive we have that:
	\begin{align*}
		\pi_s = pR - C + (1-p)(R-c(v)) &> 0 \implies 
		\\R + pc(v) -c(v) -C &> 0 \implies 
		\\ c(v) &< \frac{R - C}{1 - p}
	\end{align*}
	Consider the difference between the payoffs of the agents:
	\begin{align*}
	\pi_o-\pi_s= -(1-p)(R-c(v)) = (1-p)(c(v)-R) 
	\end{align*}
	Clearly, the difference is maximized for the maximal value of $c(v)$. Since 
	$c(v) \leq \min\{\frac{R - C}{1 - p},R+\lambda C\}$ we get that this value is maximized
	when $\frac{R - C}{1 - p} = R+\lambda C$ by rearranging we get that in this case
	$p=\frac{C(1+\lambda)}{R + \lambda C}$. Since in this case we have that:
	\begin{align*}
	\pi_o-\pi_s \leq  (1-p)(\min\{\frac{R - C}{1 - p},R+\lambda C\} - R) = p\cdot R - C
	\end{align*}
This implies the exact optimization problem as in the first case, which completes the proof.
\end{proof}
%\soedit{
%\begin{corollary}
% For any value of $\lambda\geq0$ there exists a 3-node instance such that, $\pi_s = \pi_o - \frac{(2+\lambda -2\sqrt{1+\lambda})}{\lambda} \cdot R$.
%\end{corollary}
%\begin{proof}
%Consider setting for the instances illustrated in Figure \ref{fig-3-tight}, $	C = \frac{R}{\lambda}\p{\sqrt{1+\lambda}-1}$ and 
%\end{proof}
%
%}

\end{document}